\documentclass[conference]{IEEEtran}

\IEEEoverridecommandlockouts

\usepackage{cite}
\usepackage{amsmath,amssymb,amsfonts,amsthm,bm}
\usepackage{algorithmic}
\usepackage{graphicx}
\usepackage{subcaption}
\usepackage{textcomp}
\usepackage{xcolor}
\usepackage{booktabs}
\usepackage{mathtools}
\usepackage{hyperref}
\usepackage{mathrsfs}

\hypersetup{
  colorlinks=true,
  linkcolor=black,
  citecolor=black,
  urlcolor=blue
}

\newtheorem{theorem}{Theorem}
\newtheorem{proposition}{Proposition}
\newtheorem{lemma}{Lemma}

\newtheorem{remark}{Remark}
\newtheorem{definition}{Definition}

\newcommand{\E}{\mathbb{E}}

\newcommand{\artanh}{\operatorname{artanh}}

\begin{document}

\title{Sensing with Random Signals: The Role of Time Sharing}

\author{
  \IEEEauthorblockN{Yi Geng, Wenyi Zhang}
  \IEEEauthorblockA{
    Department of Electronic Engineering and Information Science,\\
    University of Science and Technology of China,\\
    Hefei, Anhui 230027, China\\
    Email: bxzymy@mail.ustc.edu.cn, wenyizha@ustc.edu.cn
  }
}

\maketitle

\begin{abstract}
In monostatic, decision-aided, or known-waveform integrated sensing and communications (ISAC) formulations, the sensing receiver is often modeled as knowing the transmitted waveform. This assumption is not suitable for passive, bistatic, or distributed settings where the sensing receiver knows the signaling rule but not the transmitted symbols. We study such a symbol-unaware ISAC model, where sensing is measured by the unconditioned mutual information $I(S;V)$ rather than the symbol-aware quantity $I(S;V|X)$.
For discrete-input memoryless channels, we characterize the capacity--sensing region through an auxiliary time-sharing variable, showing that the optimal upper boundary is the upper concave envelope of the single-mode frontier. Thus, explicit time sharing is unnecessary when the single-mode frontier is already concave, but strictly beneficial when its upper concave envelope strictly dominates the frontier. For Rayleigh-fading BPSK, we further show that the curvature of the single-mode boundary is determined by the stochastic ordering of the communication- and sensing-side effective SNR distributions. Communication-side dominance yields a concave single-mode frontier and no time-sharing gain, sensing-side dominance yields a convex single-mode frontier and a strict time-sharing gain, and equality yields a linear boundary. The result extends to SIMO-BPSK through the ordering of post-combining SNR distributions. These findings explain when symbol-unaware ISAC optimally moves from data-symbol transmission to pilot-like sensing modes.
\end{abstract}

\begin{IEEEkeywords}
integrated sensing and communications, symbol-unaware sensing, communication capacity, time sharing.
\end{IEEEkeywords}

\section{Introduction}
ISAC enables communication and environmental sensing over shared spectrum, hardware, and waveform resources \cite{Liu2022JSAC}. Many information-theoretic and waveform-design studies assume that the sensing receiver knows the transmitted waveform or symbol realization, as in monostatic, feedback-assisted, or decision-aided architectures. Under this known-waveform premise, sensing is naturally evaluated through symbol-aware metrics and optimized through waveform, input, or beamforming design \cite{Ahmadipour2024ISAC,Xiong2023GaussianISAC,Liu2023DeterministicRandom,Li2024TVT,Wei2026JSAC,Chen2026TWC}. The premise is less suitable for passive, bistatic, or distributed ISAC, where the sensing receiver may know the signaling rule but not the realized symbols. The communication signal then remains part of the sensing uncertainty, so the relevant metric is \(I(S;V)\), not \(I(S;V|X)\). Related works on bistatic ISAC, random-signal sensing, relay-assisted sensing, and mutual-information-based sensing have shown that this regime can lead to different tradeoffs \cite{Jiao2025BistaticISAC,YLiu2026TIT,Xie2024RandomSMI,SLu2026TN,Fliu2026JSAC,JChen2026TCOM}.

This paper studies the geometry of the symbol-unaware communication--sensing frontier. Existing studies mainly evaluate fixed signaling modes, use estimation-, distortion-, or log-loss-oriented criteria, or treat time sharing as an auxiliary benchmark rather than as part of the frontier characterization itself \cite{Ahmadipour2024ISAC,Xiong2023GaussianISAC,Liu2023DeterministicRandom,Jiao2025BistaticISAC,YLiu2026TIT,JChen2026TCOM}. It remains unclear whether a single input distribution is sufficient under the metric \(I(S;V)\), or whether explicit convexification through time sharing can be fundamentally necessary.

We answer this question information-theoretically. For finite-alphabet memoryless symbol-unaware ISAC channels, we characterize the exact single-letter capacity--sensing region and show that its upper boundary is the upper concave envelope of the single-mode frontier. We then analyze Rayleigh-fading BPSK and SIMO-BPSK models, where the frontier geometry is governed by the ordering of communication- and sensing-side effective SNR distributions. Communication-side dominance yields a concave frontier and no time-sharing gain; sensing-side dominance yields a convex frontier with strict time-sharing gain; equality yields a linear frontier. Thus, under \(I(S;V)\)-based symbol-unaware sensing, time sharing can be an intrinsic component of the optimal ISAC boundary rather than a secondary design option.

\section{System Model}

Consider a symbol-unaware sensing ISAC system over a state-dependent memoryless channel as shown in Fig.1. A message $W \in [1:2^{nR}]$ is transmitted over $n$ channel uses. In addition to reliable communication, the transmitted signal probes an i.i.d.\ state sequence $S^n = (S_1,\dots,S_n)$ through sensing observations $V^n = (V_1,\dots,V_n)$.

\begin{figure}[htbp]  
    \centering
    \includegraphics[width=0.4\textwidth]{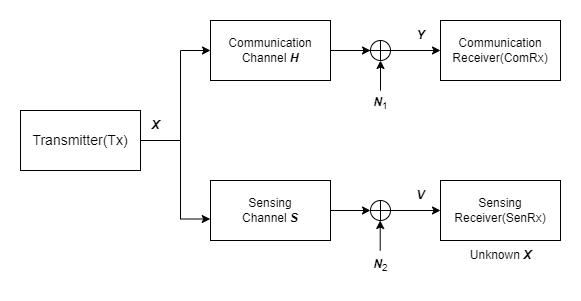}
    \caption{symbol-unaware sensing ISAC system}
    \label{fig:blind-sensing ISAC system}
\end{figure}

Unlike distortion-based formulations, we do not introduce an explicit estimator. Instead, sensing performance is quantified by the mutual information between the state and sensing observation without conditioning on the transmitted symbols, capturing the symbol-unaware sensing nature.

The physical channel is a state-dependent memoryless channel
specified by $(\mathcal X,\mathcal S,\mathcal Y,\mathcal V,P_S,W),$ where \(S_i\sim P_S\) is i.i.d. and independent of the message, and $W(y,v|x,s)=P_{Y,V|X,S}(y,v|x,s)$ is an arbitrary joint transition law from \((X,S)\) to \((Y,V)\).
Thus, for a length-\(n\) input sequence,
\begin{equation}
P(s^n,y^n,v^n|x^n)
=
\prod_{i=1}^n
P_S(s_i)W(y_i,v_i|x_i,s_i).
\label{eq:general_joint_channel}
\end{equation}

The induced communication and sensing marginal laws are
\begin{align}
W_c(y|x)
&\triangleq
\sum_{s,v} P_S(s) W(y,v|x,s), \\
W_s(v|x,s)
&\triangleq
\sum_y W(y,v|x,s),
\end{align}
or equivalently $P_{S,V|X}(s,v|x)=P_S(s)W_s(v|x,s).$
The product model $P_S(s)W_c(y|x)W_s(v|x,s)$ is only a special coupling of these two marginals and is not assumed in the general model.

\begin{definition}
A $(2^{nR},n)$ code consists of: 
(i) a message set $\mathcal{W}=[1:2^{nR}]$, 
(ii) an encoder $f_n:\mathcal{W}\to\mathcal{X}^n$, 
and (iii) a decoder $\phi_n:\mathcal{Y}^n\to\mathcal{W}$.
\end{definition}

For an input distribution $P_X$, define the communication and sensing metrics as $R(P_X)=I(X;Y)$ and $D(P_X)=I(S;V)$.

For a given code, the induced marginal at time $i$ is 
$P_{X_i}^{(n)}(x)=\Pr[X_i=x]$, and the sensing performance is defined as
\begin{equation}
D_n(f_n)
=
\frac{1}{n}\sum_{i=1}^n D\!\big(P_{X_i}^{(n)}\big).
\label{eq:Dn_def}
\end{equation}

\begin{definition}
$P_e^{(n)}$ denotes the average probability of error. A pair $(R,D)$ is achievable if there exists a sequence of $(2^{nR},n)$ codes such that $P_e^{(n)}\to 0$ and $\liminf_{n\to\infty}D_n(f_n)\ge D$.
\end{definition}

\begin{remark}
The use of $D_n(f_n)=\frac{1}{n}\sum_{i=1}^n I(S_i;V_i)$ follows directly from the memoryless sensing model. Since each state $S_i$ is observed through its corresponding sensing output $V_i$, a symbol-by-symbol sensing metric is natural, particularly for practical receivers that process sensing observations per channel use or per snapshot. This averaged single-letter metric preserves the symbol-unaware sensing effect of the unknown transmitted symbol and does not require starting from the block mutual information $I(S^n;V^n)$.
\end{remark}

\section{Single-Letter Communication--Sensing Region under Averaged Symbol-unaware SMI}

We now characterize the communication--sensing region of the symbol-unaware ISAC model in single-letter form. 
The auxiliary random variable $Q$ represents a time-sharing schedule across channel uses. 
Since the sensing metric is defined symbol by symbol, $Q$ should be understood as a per-coordinate scheduling variable rather than as a block-level sensing operation.

\begin{proposition}[Marginal invariance]
\label{prop:marginal_invariance}
Fix \(P_S\). Consider two memoryless joint transition laws
\(W\) and \(\widetilde W\) from \((X,S)\) to \((Y,V)\). Suppose that they
induce the same communication marginal and the same sensing marginal, i.e.,
for all \(x,y\),
\begin{equation}
\sum_{s,v}P_S(s)W(y,v|x,s)
=
\sum_{s,v}P_S(s)\widetilde W(y,v|x,s),
\end{equation}
and for all \(x,s,v\),
\begin{equation}
\sum_y W(y,v|x,s)
=
\sum_y \widetilde W(y,v|x,s).
\end{equation}
Then the achievable communication--sensing region under the averaged
symbol-unaware sensing metric in \eqref{eq:Dn_def} is identical for
\(W\) and \(\widetilde W\).
\end{proposition}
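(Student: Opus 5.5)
The plan is to show that, for a fixed code, the two ingredients defining achievability — the error probability $P_e^{(n)}$ and the sensing score $D_n(f_n)$ — depend on the joint law $W$ only through $W_c$ and $W_s$. If both channels assign the same error probability and the same sensing score to every code, then every sequence of codes that achieves $(R,D)$ under $W$ also achieves it under $\widetilde W$, and conversely. This gives equality of the two regions directly from the definition of achievability, without needing any converse or single-letter characterization.

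First, the communication side. Fix a $(2^{nR},n)$ code $(f_n,\phi_n)$. Marginalizing \eqref{eq:general_joint_channel} over $(s^n,v^n)$ factorizes coordinate by coordinate and gives
\begin{equation}
P(y^n|x^n)=\prod_{i=1}^n\sum_{s_i,v_i}P_S(s_i)W(y_i,v_i|x_i,s_i)=\prod_{i=1}^n W_c(y_i|x_i).
\end{equation}
The message is independent of $S^n$ and the decoder sees only $y^n$. Hence the average error probability $\frac{1}{|\mathcal W|}\sum_w\Pr[\phi_n(Y^n)\neq w\mid X^n=f_n(w)]$ is a functional of $W_c$ alone. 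By the first hypothesis it is the same under $W$ and $\widetilde W$.

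Second, the sensing side. The per-letter marginals $P_{X_i}^{(n)}$ are determined by $f_n$ and the uniform message, so they do not depend on the channel at all. For any $P_X$, the triple $(X,S,V)$ has law $P_X(x)P_S(s)\sum_y W(y,v|x,s)=P_X(x)P_S(s)W_s(v|x,s)$. Therefore $I(S;V)$, computed from the $(S,V)$ marginal $\sum_x P_X(x)P_S(s)W_s(v|x,s)$, is a functional of $(P_X,P_S,W_s)$ only. The second hypothesis then gives $D(P_{X_i}^{(n)})$ equal under both channels for every $i$, and so $D_n(f_n)$ in \eqref{eq:Dn_def} is equal as well.

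Combining the two steps, any code sequence with $P_e^{(n)}\to0$ and $\liminf_n D_n(f_n)\ge D$ under $W$ has the same properties under $\widetilde W$, and the argument is symmetric. There is no real analytic obstacle here. The only point needing care is to check that nothing in the operational definition couples $Y$ and $V$. In particular, $D_n$ uses only the time-$i$ marginals rather than $I(S^n;V^n)$ or any quantity conditioned on $Y^n$. Likewise, the encoder receives no feedback, so $P_{X_i}^{(n)}$ cannot depend on the channel. I will state these two facts explicitly, since the invariance would fail if either were relaxed.
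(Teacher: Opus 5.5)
Your proposal is correct and follows essentially the same argument as the paper: for every fixed code, the error probability depends on the channel only through $P_{Y^n|X^n}=\prod_i W_c(y_i|x_i)$, and each per-letter sensing term depends only on $P_S(s)\sum_x P_{X_i}(x)W_s(v|x,s)$. Hence every code has identical reliability and sensing performance under $W$ and $\widetilde W$. Your explicit remarks that the encoder has no feedback and that $D_n$ uses only time-$i$ marginals are worthwhile clarifications that the paper leaves implicit.
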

\begin{proof}
The proof is deferred to the extended arXiv version due to space limitations.
\end{proof}

\begin{theorem}
\label{thm:main_region}
For the memoryless symbol-unaware sensing ISAC model with arbitrary joint transition
law \(W(y,v|x,s)\), the achievable communication--sensing region, and
conversely the closure of all achievable pairs, is
\begin{equation}
\mathcal C =
\bigcup_{\substack{P_QP_{X|Q}:\\ |\mathcal Q|\le 2}}
\left\{(R,D):
\begin{aligned}
&0\le R\le I(X;Y|Q),\\
&0\le D\le I(S;V|Q)
\end{aligned}
\right\},
\label{eq:region_main}
\end{equation}
where \(I(X;Y|Q)\) is computed with the communication marginal \(W_c\), and
\(I(S;V|Q)\) is computed with the sensing marginal \(P_S W_s\). Consequently,
\(\mathcal C\) depends on the physical joint channel \(W(y,v|x,s)\) only
through the two marginals \(W_c(y|x)\) and \(W_s(v|x,s)\).
Equivalently, the dominant tradeoff boundary of \(\mathcal C\) is the upper
concave envelope of the single-mode frontier generated by input distributions
\(P_X\), and the full region is the downward closure of this boundary. Hence,
when the endpoint values are nonzero, the complete outer boundary also includes
the horizontal and vertical endpoint extensions induced by this downward closure.
\end{theorem}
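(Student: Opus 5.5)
The plan is to prove achievability and the converse separately, then use Carathéodory to bound the cardinality of $\mathcal Q$ and derive the geometric reformulation. The dependence on the two marginals only is immediate once the region is established: $I(X;Y|Q)$ involves only $P_{X|Q}W_c$, and $I(S;V|Q)$ involves only $P_{X|Q}$, $P_S$ and $W_s$. This is also consistent with Proposition~\ref{prop:marginal_invariance}, which we may invoke to replace $W$ by the product coupling $P_S W_c W_s$ whenever convenient.

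\textbf{Achievability.} Fix $P_Q$ on $\{1,2\}$ with rational weights (general weights then follow by closure) and conditionals $P_{X|Q=q}$. Choose a deterministic time-sharing sequence $q^n$ with type close to $P_Q$. Generate a random codebook with $X_i\sim P_{X|Q=q_i}$ i.i.d. across codewords and coordinates, and decode by joint typicality with respect to $W_c$. The standard channel-coding theorem with a known time-sharing sequence gives vanishing error probability whenever $R<I(X;Y|Q)$.

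For sensing, the key observation is that under a uniformly chosen message the per-coordinate marginal satisfies $P^{(n)}_{X_i}=P_{X|Q=q_i}$ exactly in expectation over the codebook ensemble. To go from the ensemble to a fixed code, note that $D(\cdot)$ is continuous on the simplex and the empirical marginals concentrate around their means. Taking expectation over the ensemble of both $P_e$ and the deviation $\frac1n\sum_i |D(P^{(n)}_{X_i})-D(P_{X|Q=q_i})|$, the usual selection argument yields one codebook with small error and $D_n\ge I(S;V|Q)-\epsilon$. The main technical step is that $2^{nR}$ codewords concentrate every coordinate marginal simultaneously, which follows from Hoeffding's inequality plus a union bound over $n|\mathcal X|$ events. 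Downward closure of the region is trivial: lower the rate, and note that the $D$ constraint is an inequality.

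\textbf{Converse.} Given codes with $P_e\to0$, apply Fano: $nR\le \sum_i I(X_i;Y_i)+n\epsilon_n$, using memorylessness of $W_c$ and the standard chain-rule bound $I(X^n;Y^n)\le\sum_i I(X_i;Y_i)$. Let $Q$ be uniform on $[1:n]$ and independent of everything, and set $X=X_Q$. Then
\begin{equation*}
R\le I(X;Y|Q)+\epsilon_n,\qquad D_n=\frac1n\sum_i I(S_i;V_i)=I(S;V|Q).
\end{equation*}
The second identity holds exactly because each $S_i$ is i.i.d. and independent of the message, so $(S_i,V_i)$ given $Q=i$ is distributed as $P_{X_i}P_SW_s$. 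The sensing side therefore requires no inequality at all, which is the reason for the averaged metric. Closedness of the region, as a continuous image of a compact set, then lets us pass to the limit.

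\textbf{Cardinality and geometry.} The set $\{(I(X;Y),I(S;V)):P_X\}$ is compact, and the union over $Q$ is exactly its convex hull. Carathéodory in $\mathbb R^2$ would give $|\mathcal Q|\le3$. To reach $|\mathcal Q|\le 2$, I would argue that only boundary points matter after downward closure, and that points on the upper boundary of a convex hull of a compact connected set in the plane are convex combinations of two extreme points; this is the Fenchel–Eggleston refinement, and connectedness holds since the simplex is connected and the map is continuous. I expect this step to be the main obstacle to state cleanly. The upper concave envelope description follows because the dominant boundary of the downward closure of a convex hull is the concave envelope of $D^*(R)=\max\{D(P_X):I(X;Y)\ge R\}$. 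The endpoint extensions then arise from downward closure at $R=0$ and at $R=C$.
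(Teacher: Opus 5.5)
Your proposal is correct. The converse and the cardinality step match the paper's: Fano plus memorylessness for the rate, the exact identity $D_n=I(S;V|Q)$ for sensing, compactness of $\operatorname{conv}(\mathcal A)$ to pass to a limit, and Fenchel--Eggleston for $|\mathcal Q|\le 2$. Your side remark is a slightly more elementary alternative that does not need connectedness: after downward closure only Pareto-boundary points matter, and a one-dimensional Carath\'eodory argument on a supporting line already gives two points.

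The genuine difference is the sensing half of achievability. The paper uses constant-composition codewords, so every codeword has exact type $P_{q,n}$ on each index set $\mathcal I_q$. This forces $\frac{1}{n_q}\sum_{i\in\mathcal I_q}P^{(n)}_{X_i}=P_{q,n}$ for every codebook. Convexity of $D(\cdot)$ (via joint convexity of relative entropy) and Jensen then give $D_n\ge\sum_q\frac{n_q}{n}D(P_{q,n})$ deterministically. You instead use i.i.d. coded time sharing and show that each column empirical distribution $P^{(n)}_{X_i}$ concentrates around $P_{X|Q=q_i}$. This needs only uniform continuity of $D$ and a joint Markov selection over error probability and sensing deviation, with no convexity lemma.

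The paper's route buys robustness. Its sensing bound holds for every codebook and every $n$, survives passing to subcodes (which the paper actually does), and works even with a single codeword. Your route needs $M=2^{nR}\to\infty$, so rate-zero points are obtained only through closure. Its guarantee is also a property of the whole codebook, because discarding messages from a fixed code can distort the column marginals. So your ``lower the rate'' step for downward closure must mean rerunning the random construction at the smaller rate, which your argument does support for every $0<R<I(X;Y|Q)$. It cannot mean pruning a given code. The same caveat would bite if maximal rather than average error probability were required.
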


\begin{proof}
For any \(P_QP_{X|Q}\), the communication quantity \(I(X;Y|Q)\) depends only on
the marginal channel \(P_{Y|X}=W_c\). Similarly, the symbol-unaware sensing quantity
\(I(S;V|Q)\) depends only on the marginal law \(P_{S,V|X}=P_SW_v\). Hence the
set of single-mode operating points
\begin{equation}
\mathcal A=
\left\{(I(X;Y),I(S;V)):P_X\in\mathcal P(\mathcal X)\right\} \subset \mathbb R^2 
\end{equation}
is the same under \(W\) and \(\widetilde W\). Since the achievable region is the
downward closure of the convex hull of \(\mathcal A\), the two regions are
identical.
\end{proof}

Geometrically, the upper boundary is given by the upper concave envelope, while the full region also contains the horizontal and vertical boundary segments induced by the downward closure.

\begin{remark}
The cardinality bound $|\mathcal Q|\le 2$ follows from the Fenchel--Eggleston strengthening of Carathéodory's theorem. Indeed, the set $\mathcal A$ is compact and connected, since $\mathcal P(\mathcal X)$ is compact and connected and mutual information is continuous in $P_X$. Hence every point in $\operatorname{conv}(\mathcal A)$ can be represented as a convex combination of at most two points in $\mathcal A$, which corresponds to a binary time-sharing variable $Q$.
\end{remark}

\begin{remark}[Relation to~\cite{WZhang2011} and the role of $Q$]
The proof follows the constrained channel-coding argument of~\cite{WZhang2011}, but the role of the sensing constraint is different. In~\cite{WZhang2011}, after optimizing the symbolwise estimator, the sensing constraint becomes an expected per-input cost, which is affine in $P_X$. Hence any time sharing can be absorbed into the averaged input distribution without decreasing the communication rate, since $I(X;Y)$ is concave in $P_X$. An explicit auxiliary variable $Q$ is therefore unnecessary there.

In the present blind-sensing model, the sensing metric is instead $D(P_X)=I(S;V),$ which is generally convex, not affine, in $P_X$. Thus, for
\(
\bar P_X=\sum_q P_Q(q)P_{X|Q=q},
\)
one only has
\begin{equation}
D(\bar P_X)
\le
\sum_q P_Q(q)D(P_{X|Q=q})
=
I(S;V|Q).
\end{equation}
Absorbing $Q$ into a single input distribution may therefore reduce the blind-sensing value. This is why \(Q\) is generally needed in \eqref{eq:region_main}, and why the optimal upper boundary is the upper concave envelope of the single-mode frontier.
\end{remark}

\begin{proof}
We give the main steps, since the reliability part follows the standard constant-composition coding argument for constrained channel coding.

\subsection*{Achievability}

Fix $P_QP_{X|Q}$ with finite $\mathcal Q$. Choose a deterministic time-sharing sequence $q^n=(q_1,\ldots,q_n)$ whose empirical distribution tends to $P_Q$, and define $\mathcal I_q=\{i:q_i=q\}$. Choose a conditional type $P_{X|Q,n}$ compatible with $q^n$ such that
\begin{equation}
P_{X|Q,n}(\cdot|q)\to P_{X|Q}(\cdot|q),
\qquad \forall q\in\mathcal Q.
\end{equation}
Generate each length-$n$ codeword uniformly from the conditional type class
$T(P_{X|Q,n}|q^n)$, so that for every $x\in\mathcal X$ and $q\in\mathcal Q$,
every codeword $x^n(m)$ satisfies
\begin{equation}
\frac{1}{|\mathcal I_q|}
\sum_{i\in\mathcal I_q}
\mathbf 1\{x_i(m)=x\}
=
P_{X|Q,n}(x|q).
\end{equation}

It remains to verify the sensing metric. For a fixed input distribution $P_X$, define $D(P_X)\triangleq I(S;V)$ under the law induced by $P_X(x)P_S(s)W_s(v|x,s)$. For fixed $P_S$ and $W_s$, $P_{V|S}$ is affine in $P_X$; since $I(S;V)$ is convex in $P_{V|S}$ for fixed $P_S$, $D(P_X)$ is convex in $P_X$.

Let $P_{X_i}^{(n)}$ denote the input marginal at coordinate $i$ induced by the uniform message over the generated codebook. The conditional-type constraint gives, for every $q\in\mathcal Q$ and $x\in\mathcal X$,
\begin{equation}
\frac{1}{|\mathcal I_q|}
\sum_{i\in\mathcal I_q}
P_{X_i}^{(n)}(x)
=
P_{X|Q,n}(x|q).
\end{equation}
Therefore, by Jensen's inequality,
\begin{equation}
\begin{aligned}
D_n(f_n)
&=
\frac1n\sum_{i=1}^n D(P_{X_i}^{(n)}) \\
&=
\sum_{q\in\mathcal Q}
\frac{|\mathcal I_q|}{n}
\frac{1}{|\mathcal I_q|}
\sum_{i\in\mathcal I_q}
D(P_{X_i}^{(n)}) \\
&\ge
\sum_{q\in\mathcal Q}
\frac{|\mathcal I_q|}{n}
D\!\left(
\frac{1}{|\mathcal I_q|}
\sum_{i\in\mathcal I_q}P_{X_i}^{(n)}
\right) \\
&=
\sum_{q\in\mathcal Q}
\frac{|\mathcal I_q|}{n}
D\!\left(P_{X|Q,n}(\cdot|q)\right).
\end{aligned}
\end{equation}
Letting $n\to\infty$ and using the continuity of mutual information in the input distribution gives
\begin{equation}
\liminf_{n\to\infty}D_n(f_n)
\ge
\sum_{q\in\mathcal Q}P_Q(q)D(P_{X|Q=q})
=
I(S;V|Q).
\end{equation}
Thus every pair satisfying
$R<I(X;Y|Q)$ and $D<I(S;V|Q)$ is achievable; the non-strict inequalities in \eqref{eq:region_main} follow by closure.

\subsection*{Converse}

Consider any achievable sequence of codes. Let $M$ be uniformly distributed over the message set and let $X^n=f_n(M)$. By Fano's inequality, the data-processing inequality, and the memoryless
property of the channel,
\begin{equation}
nR
\le
\sum_{i=1}^n I(X_i;Y_i)+n\epsilon_n,
\qquad \epsilon_n\to0.
\end{equation}
where $\epsilon_n\to0$. Hence
\begin{equation}
R
\le
\frac1n\sum_{i=1}^n I(X_i;Y_i)+\epsilon_n.
\end{equation}

Let $T$ be uniformly distributed over $\{1,\ldots,n\}$ and independent of all other variables, and define $(X,S,Y,V)=(X_T,S_T,Y_T,V_T).$ Then
\begin{equation}
\frac1n\sum_{i=1}^n I(X_i;Y_i)
=
I(X;Y|T),
\end{equation}
and, by the definition of the symbol-by-symbol symbol-unaware sensing metric,
\begin{equation}
D_n(f_n)
=
\frac1n\sum_{i=1}^n I(S_i;V_i)
=
I(S;V|T).
\end{equation}
The index $T$ only single-letterizes the $n$ coordinates of an arbitrary code; the final auxiliary variable is obtained by replacing the resulting averaged point by an equivalent cardinality-reduced time-sharing representation. Indeed, the pair $\bigl(I(X;Y|T),I(S;V|T)\bigr)$ belongs to $\operatorname{conv}(\mathcal A)$, where
\begin{equation}
\mathcal A
=
\left\{
\bigl(I(X;Y),I(S;V)\bigr):
P_X\in\mathcal P(\mathcal X)
\right\}.
\end{equation}

Since $\mathcal P(\mathcal X)$ is compact and connected, and mutual information is continuous in $P_X$, the set $\mathcal A$ is compact and connected. Hence any subsequential limit of the averaged pairs remains in $\operatorname{conv}(\mathcal A)$. By the Fenchel--Eggleston theorem, this limiting point can be represented by a time-sharing variable $Q$ with $|\mathcal Q|\le2$. Consequently,
$R\le I(X;Y|Q), \quad D\le I(S;V|Q).$
This proves the converse and completes the proof.
\end{proof}

\section{Geometry of the Single-Mode Boundary under BPSK and SIMO-BPSK}
In this section, we specialize the single-letter time-sharing structure from Section III to Rayleigh-fading BPSK and SIMO-BPSK models. Our goal is to determine whether the single-mode frontier \(C_0(D)\), generated without explicit time sharing, is already concave or whether its upper concave envelope strictly improves the achievable boundary.

\subsection{Scalar BPSK Frontier and Curvature Criterion}

Consider the scalar Rayleigh-fading BPSK model with
$X=\sqrt P B$, $B\in\{+1,-1\}$, and $\Pr(B=+1)=p$.
By the sign symmetry of BPSK, it suffices to take $p\in[1/2,1]$.
The communication and sensing observations are
\begin{equation}
Y=HX+N_1,
\label{eq:com_channel}
\end{equation}
and
\begin{equation}
V=SX+N_2,
\end{equation}
where $H\sim\mathcal{CN}(0,\Omega_H)$,
$S\sim\mathcal{CN}(0,\Omega_S)$,
$N_1\sim\mathcal{CN}(0,\sigma_1^2)$, and
$N_2\sim\mathcal{CN}(0,\sigma_2^2)$. The communication receiver knows
$H$, whereas the sensing receiver does not know the transmitted symbol
realization.

Define the scalar BPSK-AWGN mutual-information kernel
\begin{equation}
J(p,\gamma)\triangleq I(B;\sqrt{\gamma}B+Z),
\end{equation}
where $Z\sim\mathcal N(0,1)$. For each fixed $\gamma$, $J(p,\gamma)$ is
symmetric around $p=1/2$ and nonincreasing in $p$ on $[1/2,1]$.

Conditioned on $H=h$, the communication channel is a scalar BPSK-AWGN
channel with effective SNR $\frac{\gamma=2P|h|^2}{\sigma_1^2}$. Hence
\begin{equation}
R(p)\triangleq I(X;Y|H)
=
\int_0^\infty J(p,\gamma)w_c(\gamma)d\gamma,
\label{eq:R_bpsk}
\end{equation}
where $w_c(\gamma)=\frac{1}{\bar\gamma_c}e^{-\gamma/\bar\gamma_c}$, and $ \bar\gamma_c=\frac{2P\Omega_H}{\sigma_1^2}.$

For sensing, the chain rule gives
\begin{equation}
I(S;V)=I(S;V|X)+I(X;V)-I(X;V|S).
\end{equation}
Since $S$ is zero-mean circularly symmetric complex Gaussian,
$V|X=+\sqrt P$ and $V|X=-\sqrt P$ have the same distribution, and thus
$I(X;V)=0$. Therefore
\begin{equation}
G(p)\triangleq I(S;V)
=
\log_2\left(1+\frac{P\Omega_S}{\sigma_2^2}\right)
-
\int_0^\infty J(p,\gamma)w_s(\gamma)d\gamma,
\label{eq:G_bpsk}
\end{equation}
where $w_s(\gamma)=\frac{1}{\bar\gamma_s}e^{-\gamma/\bar\gamma_s},$ and $\bar\gamma_s=\frac{2P\Omega_S}{\sigma_2^2}.$.

Since $J(p,\gamma)$ is nonincreasing in $p$, $R(p)$ decreases and $G(p)$
increases with $p$. The nontrivial single-mode frontier is therefore
\begin{equation}
C_0(D)=R(G^{-1}(D)),
\qquad
D\in[D_{\min},D_{\max}],
\label{eq:C0_def_bpsk}
\end{equation}
where $D_{\min}=G(1/2)$ and $D_{\max}=G(1)$. With time sharing, the
corresponding boundary is the upper concave envelope of $C_0(D)$, and the
achievable region is its downward closure. Since \eqref{eq:C0_def_bpsk}
only covers $D\in[D_{\min},D_{\max}]$, the complete boundary also includes
the endpoint extensions
$R=R(1/2)$ for $0\le D\le D_{\min}$ and
$D=D_{\max}$ for $0\le R\le R(1)$.

The endpoint $p=1/2$ corresponds to equiprobable data symbols, whereas
$p=1$ corresponds to a deterministic pilot-like mode. Thus the curvature of
$C_0(D)$ determines whether continuous input biasing is sufficient or
explicit time sharing is needed.

Define the marginal communication cost
\begin{equation}
\eta(p)\triangleq -\frac{R'(p)}{G'(p)}.
\label{eq:eta_def}
\end{equation}
Since
\begin{equation}
C_0'(D)=\frac{R'(p)}{G'(p)}=-\eta(p),
\end{equation}
we have
\begin{equation}
C_0''(D)=-\frac{\eta'(p)}{G'(p)}.
\label{eq:C0_second_eta}
\end{equation}
Thus, on the interior interval where $G'(p)>0$, the sign of the frontier
curvature is opposite to the sign of $\eta'(p)$.

Here and in the sequel, subscripts of $J$ denote partial derivatives with
respect to the BPSK input-bias parameter $p$; that is,
$J_p(p,\gamma)=\frac{\partial J(p,\gamma)}{\partial p}$ and
$J_{pp}(p,\gamma)=\frac{\partial^2 J(p,\gamma)}{\partial p^2}$.

Let
\begin{equation}
\kappa_p(\gamma)
\triangleq
\frac{J_{pp}(p,\gamma)}{J_p(p,\gamma)}
=
\frac{\partial}{\partial p}\log[-J_p(p,\gamma)].
\label{eq:kappa_def}
\end{equation}
For $a\in\{c,s\}$, define the tilted SNR density
\begin{equation}
\mu_{a,p}(\gamma)
\triangleq
\frac{-J_p(p,\gamma)w_a(\gamma)}
{\int_0^\infty -J_p(p,t)w_a(t)dt}.
\label{eq:mu_a_def}
\end{equation}
Then
\begin{equation}
\eta'(p)
=
\eta(p)
\left(
\mathbb E_{\mu_{c,p}}[\kappa_p(\Gamma)]
-
\mathbb E_{\mu_{s,p}}[\kappa_p(\Gamma)]
\right).
\label{eq:eta_prime_bpsk}
\end{equation}
Hence the curvature problem reduces to comparing two tilted averages of the
same function $\kappa_p(\gamma)$. Equivalently, if
\begin{equation}
\Delta_\kappa(p)
\triangleq
\mathbb E_{\mu_{c,p}}[\kappa_p(\Gamma)]
-
\mathbb E_{\mu_{s,p}}[\kappa_p(\Gamma)],
\end{equation}
then
\begin{equation}
C_0''(G(p))
=
-\frac{\eta(p)}{G'(p)}\Delta_\kappa(p).
\end{equation}
Thus the sign of $\Delta_\kappa(p)$ determines the local concavity or
convexity of the single-mode frontier.

The following two lemmas turn this comparison into an ordering condition.

\begin{lemma}
\label{lem:kappa_monotone}
For every fixed $p\in(1/2,1)$, the function
$\gamma\mapsto\kappa_p(\gamma)$ is strictly increasing on $(0,\infty)$.
\end{lemma}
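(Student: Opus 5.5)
The plan is to make $J_p$ and $J_{pp}$ explicit as expectations over the BPSK log-likelihood ratio, differentiate the ratio in $\gamma$ using the Gaussian (heat-equation) structure, and reduce strict monotonicity of $\kappa_p$ to a single sign inequality. That inequality is the step I expect to be hardest. I work in nats; the base of the logarithm only rescales $J_p$ and $J_{pp}$ by the same constant and leaves $\kappa_p$ unchanged.

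\textbf{Derivatives in $p$.} Write $f_\pm$ for the densities of $\sqrt\gamma B+Z$ given $B=\pm1$, and $f_Y=pf_++(1-p)f_-$. Since $J=pD(f_+\|f_Y)+(1-p)D(f_-\|f_Y)$ and the variation through $f_Y$ vanishes, I get
\begin{equation*}
J_p(p,\gamma)=D(f_+\|f_Y)-D(f_-\|f_Y),
\qquad
J_{pp}(p,\gamma)=-\int\frac{(f_+-f_-)^2}{f_Y}\,dy<0 .
\end{equation*}
Let $L=\log(f_+/f_-)=2\sqrt\gamma\,Y$. Under $B=+1$ this LLR is consistent Gaussian, $L\sim\mathcal N(2\gamma,4\gamma)$, and $f_-/f_+=e^{-L}$. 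Hence both quantities become expectations under $B=+1$ of explicit functions of $L$:
\begin{equation*}
J_{pp}=-\mathbb E_+\!\left[\frac{(1-e^{-L})^2}{p+(1-p)e^{-L}}\right],
\qquad
J_p=\mathbb E_+\!\left[g_p(L)\right]
\end{equation*}
for an explicit $g_p$ obtained by applying the same change of measure to the $B=-1$ term. For $p\in(1/2,1)$ one checks $J_p<0$, using the symmetry of $J$ around $p=1/2$ and strict concavity in $p$; so $\kappa_p>0$ on $(0,\infty)$.

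\textbf{Reduction to a sign inequality.} Strict monotonicity of $\kappa_p=J_{pp}/J_p$ is equivalent to
\begin{equation*}
\partial_\gamma J_{pp}\cdot J_p-J_{pp}\cdot \partial_\gamma J_p>0 .
\end{equation*}
To evaluate the $\gamma$-derivatives, I will use the I-MMSE relation $J_\gamma=\tfrac12\,\mathrm{mmse}(p,\gamma)$, so that $\partial_\gamma J_p=\tfrac12\,\mathrm{mmse}_p$ and $\partial_\gamma J_{pp}=\tfrac12\,\mathrm{mmse}_{pp}$. An alternative route is the heat equation for the family $L\sim\mathcal N(2\gamma,4\gamma)$, namely $\partial_\gamma\mathbb E[h(L)]=2\,\mathbb E[h'(L)+h''(L)]$. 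Either way, the target inequality becomes
\begin{equation*}
\mathrm{mmse}_{pp}\,J_p>J_{pp}\,\mathrm{mmse}_p ,
\end{equation*}
which is a statement about one-dimensional Gaussian integrals in the single variable $L$.

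\textbf{Main obstacle and how I would attack it.} Proving this last inequality for all $\gamma>0$ and $p\in(1/2,1)$ is the hard part. My first attempt is to write both sides as integrals against the common Gaussian weight of $L$. I would then recast the difference as a covariance, $\mathrm{Cov}(u(L),v(L))$, of two functions that are monotone in the same direction, so that Chebyshev's association inequality gives the sign. The two functions would come from ratios of the integrands, which are rational in $e^{-L}$. If that fails, the fallback is to parametrize by $\lambda=\log\frac{p}{1-p}$ and expand the posterior mean $\tanh\bigl((L+\lambda)/2\bigr)$. This expresses all four quantities through moments of $\tanh$ and $\operatorname{sech}^2$ under the Gaussian law, after which I would apply a log-concavity/TP$_2$ argument in $(\gamma,L)$. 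Before committing, I would sanity-check the claim against the limits $\gamma\to0$, where $J_p\propto\gamma$ and $J_{pp}\propto\gamma$, and $\gamma\to\infty$, where $J\to h_2(p)$ and $\kappa_p\to h_2''(p)/h_2'(p)$. These confirm that $\kappa_p$ increases from its small-SNR value toward the entropy ratio.
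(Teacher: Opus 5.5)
\textbf{There is a genuine gap: the proposal stops at the decisive step.} Your preliminary computations are correct. The formulas for $J_p$ and $J_{pp}$, the law $L\sim\mathcal N(2\gamma,4\gamma)$ under $B=+1$, the I-MMSE and heat-equation identities, and the limits $2/(2p-1)$ and $1/\bigl(p(1-p)\ln\frac{p}{1-p}\bigr)$ all check out. But the inequality $\mathrm{mmse}_{pp}J_p>J_{pp}\,\mathrm{mmse}_p$ is only $\partial_\gamma\kappa_p>0$ rewritten, and you do not prove it. You identify neither the two comonotone functions for the Chebyshev step nor a TP$_2$ kernel, and endpoint limits say nothing about monotonicity in between.

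The framework you chose also works against you. Differentiating in $\gamma$ brings in $\mathrm{mmse}_p$ and $\mathrm{mmse}_{pp}$, which are no easier to control than $J_p$ and $J_{pp}$. In the variable $L$ on the whole line, the family $\mathcal N(2\gamma,4\gamma)$ is not MLR-ordered in $\gamma$. For $\gamma_2>\gamma_1$ the likelihood ratio is proportional to $\exp\bigl[\tfrac{l^2}{8}(\gamma_1^{-1}-\gamma_2^{-1})\bigr]$, which is monotone only in $|l|$. Your integrands, such as $(1-e^{-L})^2/(p+(1-p)e^{-L})$, are unbounded and pick up a non-vanishing contribution from the left tail; that is why $J_{pp}\to-1/(p(1-p))$ rather than $-1/p$. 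So the ``common Gaussian weight'' does not supply the monotone structure you need.

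\textbf{The missing idea} (the paper's route) is to avoid $\gamma$-derivatives entirely and let $p$ enter only the integrands, never the measure. Set $m=2p-1$. The output density factors as $c_\gamma(y)\bigl(1+m\tanh(\sqrt\gamma y)\bigr)$, where $c_\gamma$ is the even output density under equiprobable input. Folding the odd part gives $-J_m=\E[B_m(U_\gamma)]$ and $-J_{mm}=\E[A_m(U_\gamma)]$, where $B_m(u)=u\artanh(mu)$, $A_m(u)=u^2/(1-m^2u^2)$, and $U_\gamma\overset{d}{=}\tanh|\gamma+\sqrt\gamma Z|$ does not depend on $p$. Hence $\kappa_p(\gamma)=2\E[A_m(U_\gamma)]/\E[B_m(U_\gamma)]$.

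Your $L$-representation reaches the same point. The $B=+1$ density $\varphi$ of $L$ satisfies $\varphi(-l)=e^{-l}\varphi(l)$. Folding onto $|L|$ with $U=\tanh(|L|/2)$ turns your $J_{pp}$ integrand into $4U^2/(1-m^2U^2)$.

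Three facts then finish the proof:
\begin{enumerate}
\item The density of $|\gamma+\sqrt\gamma Z|$ is proportional to $\cosh v\,e^{-\gamma/2-v^2/(2\gamma)}$, so this family is strictly MLR-increasing in $\gamma$.
\item The ratio $A_m(u)/B_m(u)=u/\bigl((1-m^2u^2)\artanh(mu)\bigr)$ is strictly increasing on $(0,1)$. This reduces to $(1+x^2)\artanh x-x>0$ for $x\in(0,1)$, which holds because the left side vanishes at $0$ and has positive derivative.
\item $\E[A_m]/\E[B_m]$ is the mean of $A_m/B_m$ under the $B_m$-tilted law, and the tilted family inherits the MLR ordering.
\end{enumerate}
Comparing $\gamma_1<\gamma_2$ by a single-crossing or covariance argument then gives strict monotonicity. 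This is the Chebyshev step you were reaching for, but applied to the ratio itself on the folded variable rather than to its $\gamma$-derivative.
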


\begin{proof}
The proof is standard and is omitted; see the extended arXiv version for details.
\end{proof}

\begin{lemma}
\label{lem:mlr_ordering}
Let $q_1$ and $q_2$ be probability densities on an interval
$\mathcal X\subseteq\mathbb R$. If $q_1/q_2$ is nondecreasing, then for
every measurable increasing function $\phi$ with finite expectations,
\begin{equation}
\mathbb E_{q_1}[\phi(X)]\ge \mathbb E_{q_2}[\phi(X)].
\end{equation}
If both $q_1/q_2$ and $\phi$ are strictly increasing, the inequality is
strict.
\end{lemma}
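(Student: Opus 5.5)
The plan is to prove Lemma~\ref{lem:mlr_ordering} with the classical ``single crossing'' argument for likelihood-ratio ordering. Write $r(x)=q_1(x)/q_2(x)$, interpreted on the common support. Since $q_1$ and $q_2$ are both probability densities, $\int (q_1-q_2)\,dx=0$. Because $r$ is nondecreasing, the difference $q_1-q_2=q_2(r-1)$ changes sign at most once, from nonpositive to nonnegative. Concretely, set $x_0=\inf\{x: r(x)\ge 1\}$. Then $q_1-q_2\le 0$ for $x<x_0$ and $q_1-q_2\ge 0$ for $x>x_0$.

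The key step is then the identity
\begin{equation}
\mathbb E_{q_1}[\phi(X)]-\mathbb E_{q_2}[\phi(X)]
=\int \bigl(\phi(x)-\phi(x_0)\bigr)\bigl(q_1(x)-q_2(x)\bigr)\,dx .
\end{equation}
It holds because subtracting the constant $\phi(x_0)$ costs nothing, since $q_1-q_2$ integrates to zero. Finiteness of the expectations justifies splitting the integral. In the integrand both factors are $\le 0$ for $x<x_0$ and both are $\ge 0$ for $x>x_0$, because $\phi$ is increasing. The integrand is therefore pointwise nonnegative, which gives the weak inequality.

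An alternative I would keep in reserve is the double-integral symmetrization
\begin{equation}
\tfrac12\iint \bigl(\phi(x)-\phi(y)\bigr)\bigl(q_1(x)q_2(y)-q_1(y)q_2(x)\bigr)\,dx\,dy .
\end{equation}
This equals the same difference, and its integrand is nonnegative because $q_1(x)q_2(y)\ge q_1(y)q_2(x)$ whenever $x\ge y$. This form avoids defining $x_0$ and handles boundary issues cleanly, so I would probably present this version.

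For strictness, assume $r$ and $\phi$ are strictly increasing. In the double integral, on the set where $x>y$ and both points lie in the support, the factor $\phi(x)-\phi(y)$ is strictly positive and so is $q_1(x)q_2(y)-q_1(y)q_2(x)=q_2(x)q_2(y)(r(x)-r(y))$. This set has positive two-dimensional measure provided the support of $q_2$ is not a single point, which holds since these are densities on an interval. Hence the integral is strictly positive.

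The main obstacle is the bookkeeping where $q_2$ vanishes, so that the ratio is undefined or infinite. I would handle it by adopting the convention that $r$ is defined as an extended-valued monotone function on the union of the supports. In the application to the paper's exponential densities $w_c,w_s$ tilted by $-J_p$, both densities are positive on $(0,\infty)$, so the issue does not arise there.
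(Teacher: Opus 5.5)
Your proposal is correct, and the double-integral symmetrization you say you would present is essentially the paper's proof. The paper writes $q_1=rq_2$, identifies $\mathbb E_{q_1}[\phi(X)]-\mathbb E_{q_2}[\phi(X)]$ with $\operatorname{Cov}_{q_2}(\phi(X),r(X))$, and symmetrizes it to $\frac12\iint(\phi(x)-\phi(y))(r(x)-r(y))\,q_2(x)q_2(y)\,dx\,dy$, which is your integrand after factoring out $q_2(x)q_2(y)$; it gets strictness, as you do, from off-diagonal positivity and $\mathbb P(X=Y)=0$. Your single-crossing route via $x_0$ is also valid (the paper uses that argument, with $u_\star$, in the proof of Lemma~\ref{lem:kappa_monotone}), and your cross-product form $q_1(x)q_2(y)\ge q_1(y)q_2(x)$ avoids the paper's implicit assumption $q_1\ll q_2$ in writing $q_1=rq_2$.
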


\begin{proof}
The proof is deferred to the extended arXiv version due to space limitations.
\end{proof}

By Lemmas~\ref{lem:kappa_monotone} and~\ref{lem:mlr_ordering}, the sign of
$\eta'(p)$ is determined by the monotone likelihood-ratio ordering between
$\mu_{c,p}$ and $\mu_{s,p}$. In the scalar Rayleigh case, this ordering
reduces to comparing $\bar\gamma_c$ and $\bar\gamma_s$, as shown next.

\subsection{Scalar Rayleigh-BPSK Classification}

We now specialize the curvature criterion to the scalar Rayleigh-fading BPSK model. Recall that $w_a(\gamma)=\frac{1}{\bar\gamma_a}e^{-\gamma/\bar\gamma_a}$ for $a\in\{c,s\}$, where $\bar\gamma_c=\frac{2P\Omega_H}{\sigma_1^2}$ and $\bar\gamma_s=\frac{2P\Omega_S}{\sigma_2^2}$.
For the tilted SNR densities defined in \eqref{eq:mu_a_def}, their likelihood
ratio is
\begin{equation}
\frac{\mu_{c,p}(\gamma)}{\mu_{s,p}(\gamma)}
=
C_p\frac{w_c(\gamma)}{w_s(\gamma)}
=
C_p\frac{\bar\gamma_s}{\bar\gamma_c}
\exp\!\left[
\gamma\left(
\frac{1}{\bar\gamma_s}
-
\frac{1}{\bar\gamma_c}
\right)
\right],
\label{eq:rayleigh_tilted_ratio}
\end{equation}
where $C_p>0$ is independent of $\gamma$. Hence the MLR ordering between
$\mu_{c,p}$ and $\mu_{s,p}$ is determined solely by the average effective
SNRs $\bar\gamma_c$ and $\bar\gamma_s$.

\begin{theorem}
\label{thm:scalar_bpsk_classification}
For the scalar BPSK-CSCG/Rayleigh model, the single-mode boundary
$C_0(D)=R(G^{-1}(D))$ satisfies:
\begin{itemize}
  \item if $\bar\gamma_c>\bar\gamma_s$, then $C_0(D)$ is strictly concave on
  the interior interval, and time sharing is unnecessary;
  \item if $\bar\gamma_c<\bar\gamma_s$, then $C_0(D)$ is strictly convex on
  the interior interval, and time sharing gives a strict gain;
  \item if $\bar\gamma_c=\bar\gamma_s$, then $C_0(D)$ is linear.
\end{itemize}
\end{theorem}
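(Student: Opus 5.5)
The proof combines the curvature identity $C_0''(G(p))=-\frac{\eta(p)}{G'(p)}\Delta_\kappa(p)$ with Lemmas~\ref{lem:kappa_monotone} and~\ref{lem:mlr_ordering}, applied to the explicit likelihood ratio \eqref{eq:rayleigh_tilted_ratio}.

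\emph{Preliminary signs.} For $p\in(1/2,1)$ I first check that $-J_p(p,\gamma)>0$ for every $\gamma>0$. This holds because $J(\cdot,\gamma)$ is the mutual information of a nondegenerate binary-input channel: it is strictly concave in $p$ and maximized at $p=1/2$, so it is strictly decreasing on $(1/2,1)$. Consequently $R'(p)<0$ and $G'(p)>0$, so $\eta(p)>0$ and the tilted densities $\mu_{a,p}$ are well defined. The normalizers are finite because $|J_p|$ is bounded by $|\log((1-p)/p)|$ times a constant; more precisely, $J_p=\log\frac{1-p}{p}$ minus a bounded expectation. The integrability of $\kappa_p$ against $\mu_{a,p}$ must also be checked. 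Since $\mu_{a,p}\kappa_p\propto J_{pp}w_a$, this reduces to integrability of $J_{pp}$, which is bounded for $p$ in the interior. Hence the sign of $C_0''$ equals the sign of $-\Delta_\kappa(p)$.

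\emph{Case analysis.} By \eqref{eq:rayleigh_tilted_ratio}, $\mu_{c,p}/\mu_{s,p}$ is a positive constant times $\exp[\gamma(1/\bar\gamma_s-1/\bar\gamma_c)]$.
\begin{itemize}
\item If $\bar\gamma_c>\bar\gamma_s$, the exponent coefficient is positive, so the ratio is strictly increasing. Lemma~\ref{lem:kappa_monotone} says $\kappa_p$ is strictly increasing, so the strict part of Lemma~\ref{lem:mlr_ordering} gives $\Delta_\kappa(p)>0$. Hence $C_0''<0$ at every interior point, and $C_0$ is strictly concave on $(D_{\min},D_{\max})$.
\item If $\bar\gamma_c<\bar\gamma_s$, I swap the roles of $q_1$ and $q_2$ in Lemma~\ref{lem:mlr_ordering}. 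This gives $\Delta_\kappa(p)<0$, so $C_0''>0$ and $C_0$ is strictly convex.
\item If $\bar\gamma_c=\bar\gamma_s$, then $w_c=w_s$, so $\mu_{c,p}=\mu_{s,p}$ and $\Delta_\kappa\equiv0$. Hence $C_0''\equiv0$ and $C_0$ is affine.
\end{itemize}
The linear case also follows directly: $G(p)=\log_2(1+P\Omega_S/\sigma_2^2)-R(p)$, so $C_0(D)=\text{const}-D$. Differentiability of $C_0$ on the open interval follows from the inverse function theorem, since $G'>0$. Differentiation under the integral is justified by dominated convergence using the bounds above.

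\emph{Time-sharing consequences via Theorem~\ref{thm:main_region}.} The upper boundary is the upper concave envelope of $C_0$ together with the endpoint extensions.
\begin{itemize}
\item In the concave case, $C_0$ is continuous on $[D_{\min},D_{\max}]$ and concave. Adding the flat segment $R=R(1/2)$ on $[0,D_{\min}]$ preserves concavity because $C_0$ is decreasing. So the envelope equals the single-mode frontier and no $Q$ is needed.
\item In the convex case, strict convexity implies that the chord joining $(D_{\min},R(1/2))$ and $(D_{\max},R(1))$ lies strictly above $C_0$ on the interior. That chord is achieved by binary time sharing between the data mode $p=1/2$ and the pilot mode $p=1$, which gives the strict gain.
\end{itemize}

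\emph{Main obstacle.} The conceptual content is already contained in Lemmas~\ref{lem:kappa_monotone} and~\ref{lem:mlr_ordering}. The step needing the most care is the regularity bookkeeping: strict positivity of $-J_p$ for all $\gamma>0$, finiteness of the tilted expectations (in particular near $\gamma\to\infty$, where $J_p\to\log\frac{1-p}{p}$ and $J_{pp}$ tends to a bounded limit), and validity of \eqref{eq:eta_prime_bpsk}. I would handle these by dominated convergence, using the exponential tails of $w_a$ and uniform bounds on $J_p$ and $J_{pp}$ for $p$ in compact subsets of $(1/2,1)$.
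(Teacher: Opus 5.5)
Your proposal is correct and takes essentially the same route as the paper. The explicit Rayleigh ratio \eqref{eq:rayleigh_tilted_ratio} gives the MLR ordering of $\mu_{c,p}$ and $\mu_{s,p}$, Lemmas~\ref{lem:kappa_monotone} and~\ref{lem:mlr_ordering} fix the sign of $\Delta_\kappa(p)$, the $C_0''$ identity turns that sign into strict concavity or convexity, and $w_c=w_s$ gives $R+G=\text{const}$ in the equal case; your regularity checks (strict negativity of $J_p$, finiteness of the tilted expectations) and the envelope argument for the time-sharing conclusions are sound and just make explicit what the paper leaves implicit.
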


\begin{proof}
When $\bar\gamma_c>\bar\gamma_s$, \eqref{eq:rayleigh_tilted_ratio} is
strictly increasing in $\gamma$, so $\mu_{c,p}$ dominates $\mu_{s,p}$ in the
MLR order. Since Lemma~\ref{lem:kappa_monotone} shows that
$\kappa_p(\gamma)$ is strictly increasing, Lemma~\ref{lem:mlr_ordering}
implies
\[
\mathbb E_{\mu_{c,p}}[\kappa_p(\Gamma)]
>
\mathbb E_{\mu_{s,p}}[\kappa_p(\Gamma)].
\]
By \eqref{eq:eta_prime_bpsk} and \eqref{eq:C0_second_eta}, this gives
$C_0''(D)<0$. The case $\bar\gamma_c<\bar\gamma_s$ is identical with the MLR
ordering reversed, giving $C_0''(D)>0$. If
$\bar\gamma_c=\bar\gamma_s$, then $w_c=w_s$, and hence
\[
R(p)+G(p)
=
\log_2\left(1+\frac{P\Omega_S}{\sigma_2^2}\right),
\qquad p\in[1/2,1],
\]
so the frontier is linear.
\end{proof}

The theorem gives a data-to-pilot interpretation of the symbol-unaware ISAC
frontier. If the communication-side effective SNR dominates, gradually
biasing the BPSK input from the equiprobable data mode toward the
deterministic pilot-like mode is already optimal. If the sensing-side
effective SNR dominates, the biased-BPSK operating points are dominated by
time sharing between the two endpoints; in the strictly convex case, the
true boundary is therefore the chord connecting the data endpoint and the
pilot-like endpoint. Fig.~\ref{fig:bpsk_scalar_cases} illustrates the three regimes.

Although the SNR ordering determines the sign of $C_0''(D)$, the kernel curvature ratio $\kappa_p(\gamma)$ is bounded for every fixed $p \in (\frac{1}{2},1)$. Hence, the time-sharing gain grows only modestly with the SNR separation.

\begin{figure}[t]
    \centering
    \includegraphics[width=\linewidth]{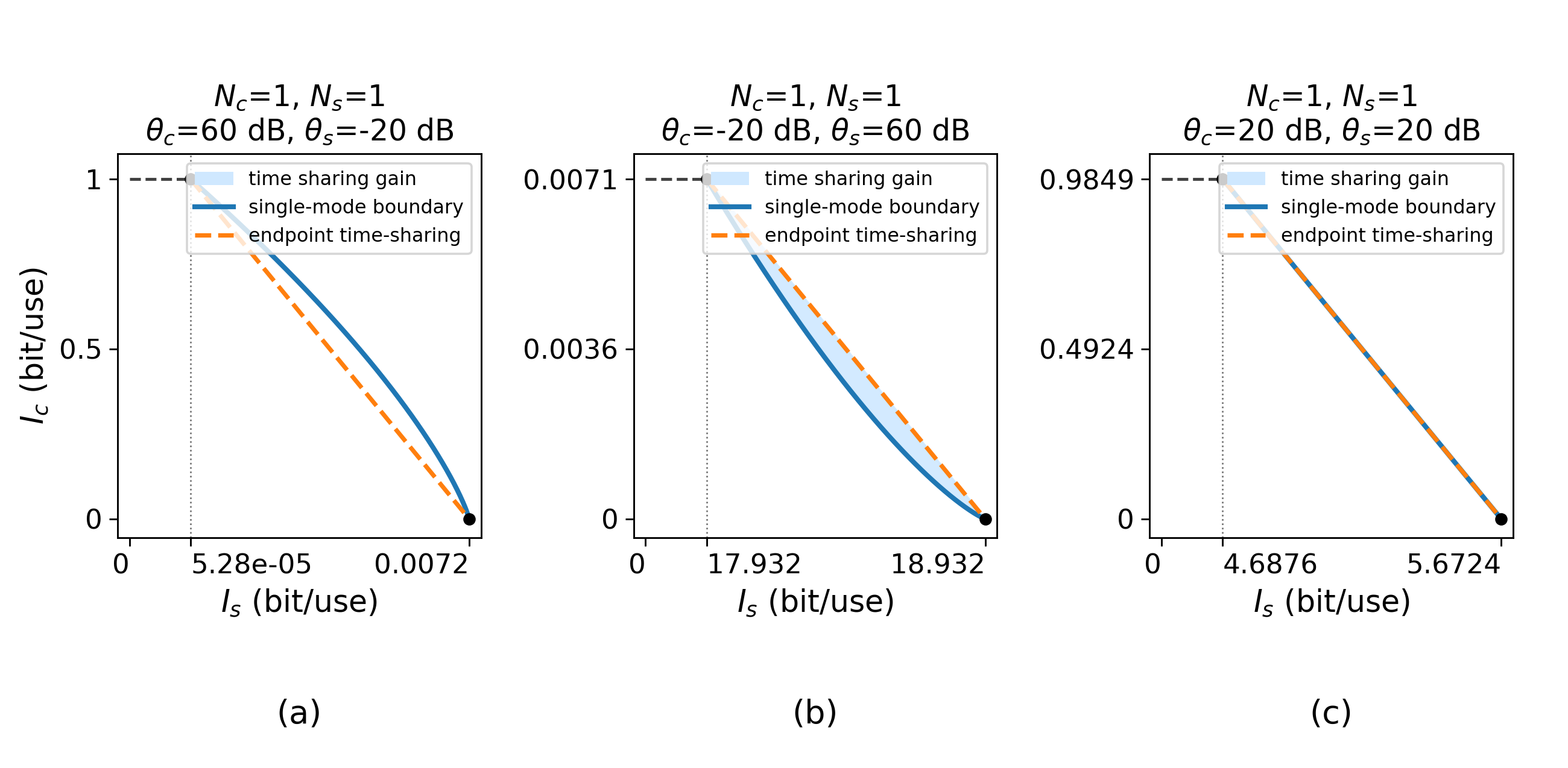}
     \caption{i.i.d.\ BPSK frontiers with a $80$ dB effective-SNR separation.}
    \label{fig:bpsk_scalar_cases}
\end{figure}

\subsection{Closed-form criterion for i.i.d.\ SIMO}

Before specializing to the i.i.d. case, we recall how the scalar curvature
criterion extends to SIMO-BPSK. Conditioned on $\bm H=\bm h$, maximal-ratio
combining reduces the communication channel to a scalar BPSK-AWGN channel
with effective SNR $\Gamma_c=2P\|\bm h\|^2/\sigma^2$. Similarly, conditioned on
$\bm S=\bm s$, the sensing-side penalty term $I(X;\bm V|\bm S=\bm s)$ is the
same scalar kernel evaluated at $\Gamma_s=2P\|\bm s\|^2/\sigma^2$. Hence
\begin{equation}
R(p)=\mathbb E[J(p,\Gamma_c)]
\end{equation}
and
\begin{equation}
G(p)=C_s-\mathbb E[J(p,\Gamma_s)].
\end{equation}
Consequently, all curvature arguments above remain valid after replacing
the exponential densities $w_c,w_s$ by the post-combining SNR densities
$f_c,f_s$. In particular, if $f_c/f_s$ is nondecreasing, the single-mode frontier $C_0(D)$ is
concave; if it is nonincreasing, the the single-mode frontier $C_0(D)$ is convex.

Consider the i.i.d.\ case. The communication covariance is $R_c=\Omega_c I_{N_c}$. The sensing covariance is $R_s=\Omega_s I_{N_s}$.

Define $\theta_c=\frac{2P\Omega_c}{\sigma^2}$ and $\theta_s=\frac{2P\Omega_s}{\sigma^2}$. Then $\Gamma_c\sim\mathrm{Gamma}(N_c,\theta_c)$, and $\Gamma_s\sim\mathrm{Gamma}(N_s,\theta_s)$.

The density ratio can be written as
\begin{equation}
\frac{f_c(\gamma)}{f_s(\gamma)}
=
C\,\gamma^{N_c-N_s}
\exp\!\left[
-\gamma\left(\frac{1}{\theta_c}-\frac{1}{\theta_s}\right)
\right],
\end{equation}
where $C>0$ is independent of $\gamma$. Taking the derivative of the logarithm yields
\begin{equation}
\frac{d}{d\gamma}\log\frac{f_c(\gamma)}{f_s(\gamma)}
=
\frac{N_c-N_s}{\gamma}
-
\left(\frac{1}{\theta_c}-\frac{1}{\theta_s}\right).
\end{equation}

Since $\frac{f_c(\gamma)}{f_s(\gamma)}>0$ on $(0,\infty)$, the monotonicity of
$\frac{f_c(\gamma)}{f_s(\gamma)}$ is equivalent to the sign of
$\frac{d}{d\gamma}\log\frac{f_c(\gamma)}{f_s(\gamma)}$ on $(0,\infty)$.

\begin{theorem}
Consider the i.i.d.\ SIMO-BPSK setting with
$R_c=\Omega_c I_{N_c}$ and $R_s=\Omega_s I_{N_s}$.
Let $\theta_c=\frac{2P\Omega_c}{\sigma^2}$ and
$\theta_s=\frac{2P\Omega_s}{\sigma^2}$.

\begin{itemize}
  \item If $N_c\ge N_s$ and $\theta_c\ge \theta_s$, then $C_0(D)$ is concave; if at least one inequality is strict, it is strictly concave, hence time-sharing is unnecessary.

  \item If $N_c\le N_s$ and $\theta_c\le \theta_s$, then $C_0(D)$ is convex; if at least one inequality is strict, it is strictly convex, hence time-sharing yields a strict gain.

  \item If $N_c=N_s$ and $\theta_c=\theta_s$, then $C_0(D)$ is linear.

  \item Otherwise (i.e., $N_c>N_s$ and $\theta_c<\theta_s$, or $N_c<N_s$ and $\theta_c>\theta_s$), the ratio $f_c(\gamma)/f_s(\gamma)$ is not monotone on $(0,\infty)$; 
\end{itemize}
\end{theorem}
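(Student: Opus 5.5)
The plan is to reduce everything to the sign of the logarithmic derivative computed just before the statement,
\[
\ell(\gamma)\triangleq\frac{d}{d\gamma}\log\frac{f_c(\gamma)}{f_s(\gamma)}=\frac{N_c-N_s}{\gamma}-\Big(\frac{1}{\theta_c}-\frac{1}{\theta_s}\Big),
\]
and then run the scalar argument of Theorem~\ref{thm:scalar_bpsk_classification} with $w_c,w_s$ replaced by $f_c,f_s$.

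The first step is to note that the tilted densities satisfy $\mu_{c,p}/\mu_{s,p}=C_p f_c/f_s$ with $C_p>0$ independent of $\gamma$. This holds because the tilt factor $-J_p(p,\gamma)$ is common to both and cancels. Consequently the identity \eqref{eq:eta_prime_bpsk} and the curvature formula \eqref{eq:C0_second_eta} carry over verbatim, as already observed in the SIMO paragraph.

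In case one, $N_c\ge N_s$ and $\theta_c\ge\theta_s$. Then $N_c-N_s\ge0$ and $1/\theta_c-1/\theta_s\le0$, so $\ell(\gamma)\ge0$ on $(0,\infty)$. If at least one inequality is strict, then $\ell(\gamma)>0$ for every $\gamma>0$. The first term is then positive for all $\gamma$, or the constant term is, and the other term is nonnegative. Hence $f_c/f_s$ is strictly increasing. Lemma~\ref{lem:kappa_monotone} gives strict monotonicity of $\kappa_p$, and the strict part of Lemma~\ref{lem:mlr_ordering} then gives $\Delta_\kappa(p)>0$ for every $p\in(1/2,1)$. Since $\eta(p)>0$ and $G'(p)>0$ on the interior, we get $C_0''(D)<0$, so the frontier is strictly concave. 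It is therefore its own upper concave envelope, and by Theorem~\ref{thm:main_region} time sharing adds nothing.

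Case two is the mirror image. Swapping the roles of $q_1,q_2$ in Lemma~\ref{lem:mlr_ordering} gives $\Delta_\kappa<0$ and hence $C_0''>0$. Strict convexity means the chord joining the endpoints $p=1/2$ and $p=1$ lies strictly above $C_0$ in the interior. By Theorem~\ref{thm:main_region} that chord is achievable, which gives the strict time-sharing gain.

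In case three the two Gamma laws coincide, so $f_c=f_s$. Exactly as in the scalar equality case, $R(p)+G(p)=C_s$, which is linear.

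In the last case, suppose $N_c>N_s$ and $\theta_c<\theta_s$. Then $\ell(\gamma)\to+\infty$ as $\gamma\to0^+$, while $\ell(\gamma)\to-(1/\theta_c-1/\theta_s)<0$ as $\gamma\to\infty$. Since $\ell$ is continuous and strictly monotone in $\gamma$, it has a unique zero $\gamma^\star=(N_c-N_s)/(1/\theta_c-1/\theta_s)$. So $f_c/f_s$ increases and then decreases, and is not monotone. The reversed case is symmetric, with $f_c/f_s$ decreasing and then increasing.

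The main obstacle is making sure the transfer of Lemma~\ref{lem:kappa_monotone} and the tilted-ratio cancellation are legitimate under Gamma rather than exponential weights. Concretely, I need finiteness of $\int -J_p(p,\gamma)f_a(\gamma)\,d\gamma$ and of $\mathbb E_{\mu_{a,p}}[|\kappa_p|]$, and I need to be allowed to differentiate $R$ and $G$ under the integral. My plan is to use the boundedness of $\kappa_p$ noted after Theorem~\ref{thm:scalar_bpsk_classification}, together with the bound $|J_p|\le\log_2\frac{p}{1-p}$ (independent of $\gamma$) on compact subintervals of $(1/2,1)$. Dominated convergence then applies, since Gamma densities are integrable probability measures. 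Beyond that, the case analysis of the affine-in-$1/\gamma$ function $\ell$ is elementary.
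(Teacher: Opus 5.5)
Your proposal is correct and follows essentially the paper's own route. The paper likewise reduces everything to the sign of $\frac{N_c-N_s}{\gamma}-\bigl(\frac{1}{\theta_c}-\frac{1}{\theta_s}\bigr)$ and reuses the scalar machinery with $w_c,w_s$ replaced by $f_c,f_s$: the cancellation of $-J_p$ in $\mu_{c,p}/\mu_{s,p}$, Lemma~\ref{lem:kappa_monotone}, Lemma~\ref{lem:mlr_ordering}, and \eqref{eq:eta_prime_bpsk}--\eqref{eq:C0_second_eta}. Your dominated-convergence justification, via the $\gamma$-uniform bound $|J_p|\le\log_2\frac{p}{1-p}$ on compact subintervals of $(1/2,1)$, is a detail the paper leaves implicit.
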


In the SIMO case, the antenna numbers and per-antenna channel powers jointly determine the ordering of the post-combining SNR distributions. 
A larger number of communication antennas or a larger communication-side scale parameter shifts the communication SNR distribution toward larger values, favoring a concave frontier. 
Conversely, if the sensing side has both a larger diversity order and a larger scale parameter, the single-mode frontier becomes convex, so the corresponding single-mode region is nonconvex and time sharing with the pilot-like endpoint becomes beneficial.
When these two effects are mixed, the likelihood-ratio ordering may fail to be monotone, and the global curvature can no longer be determined by a simple dominance rule. We next validate the corollary through three i.i.d.\ SIMO-BPSK examples shown in Fig.~\ref{fig:simo_iid_cases}. 

In Fig.~\ref{fig:bpsk_scalar_cases} and Fig.~\ref{fig:simo_iid_cases}, the horizontal span of the sensing coordinate between the deterministic endpoint and the uniform-BPSK endpoint equals the nuisance-symbol information $I(X;V\mid X)$. In the high sensing-SNR regime, this term approaches $H(X)=1$ bit for uniform BPSK, reflecting the binary sign ambiguity induced by the unknown transmitted symbol. At moderate sensing SNR, the same loss term is smaller than one bit because the symbol itself is not fully resolvable from the noisy sensing observation.

\begin{figure}[t]
    \centering
    \includegraphics[width=1\columnwidth]{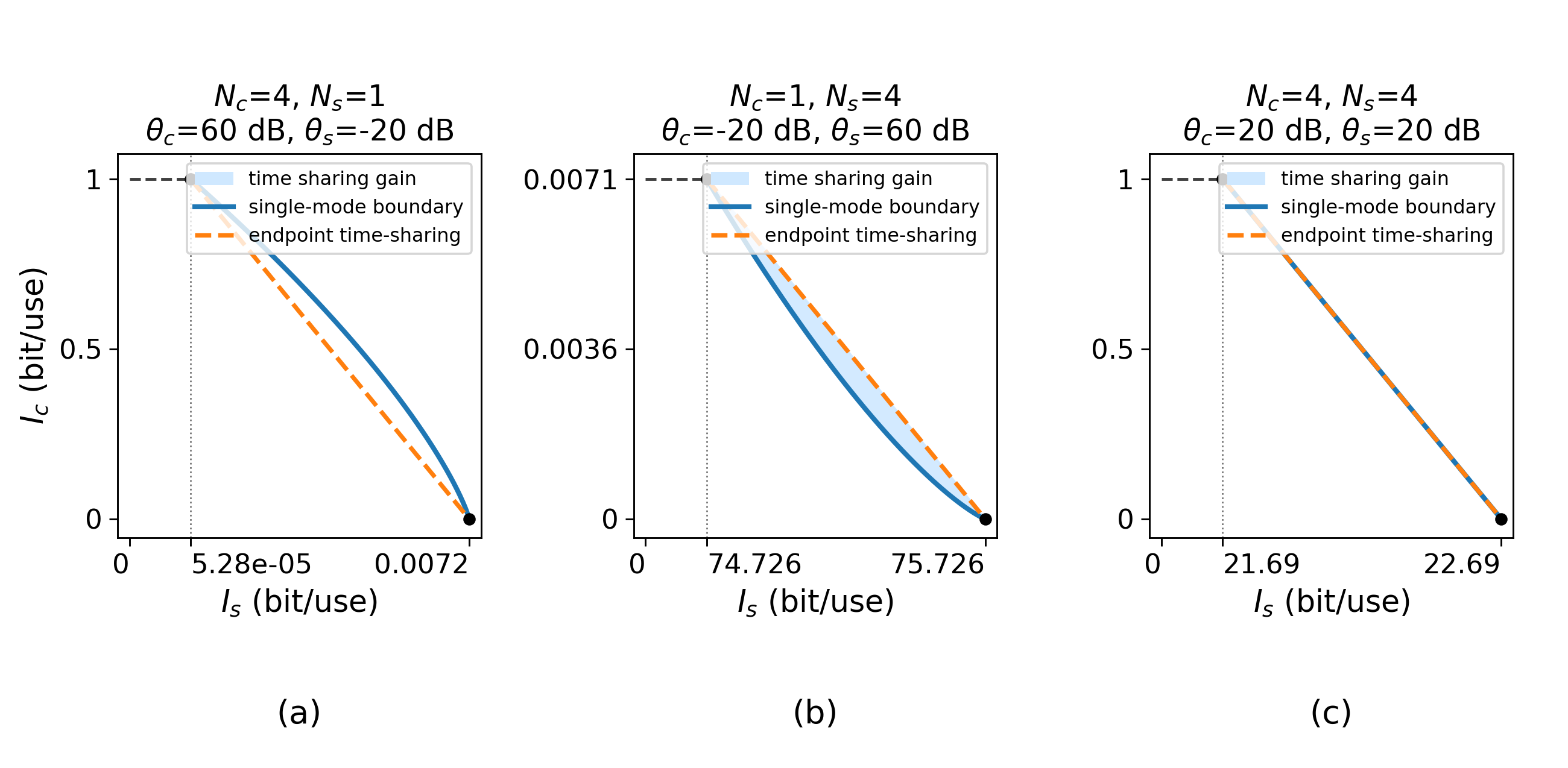}
     \caption{i.i.d.\ SIMO-BPSK frontiers with a $80$ dB effective-SNR separation.}
    \label{fig:simo_iid_cases}
\end{figure}

\section{Conclusion}
This paper studies ISAC when the sensing receiver knows the signaling law but does not have access to the realized transmitted symbols. Under this symbol-unaware sensing model, the sensing metric retains the uncertainty induced by the transmit signal, leading to a communication--sensing tradeoff fundamentally different from formulations that condition on the channel input.
For Rayleigh-fading BPSK, we derive an exact global criterion under which the single-mode frontier is strictly concave, strictly convex, or linear, depending on the ordering between the communication-side and sensing-side average effective SNRs.
These results demonstrate that, under symbol-unaware sensing, time sharing is not merely an implementation option but a structural component of the optimal capacity--sensing boundary whenever the single-mode frontier is nonconcave.

\appendices
\section{Proof of Proposition 1}\label{app:proposition 1}
For any code, the decoding error probability depends on the physical
channel only through the marginal law \(P_{Y^n|X^n}\). By memorylessness,
this marginal is
\begin{equation}
P_{Y^n|X^n}(y^n|x^n)=\prod_{i=1}^n W_c(y_i|x_i),
\end{equation}
which is the same for \(W\) and \(\widetilde W\).

Similarly, the sensing metric is symbol-by-symbol:
\begin{equation}
D_n(f_n)=\frac1n\sum_{i=1}^n I(S_i;V_i).
\end{equation}
For each coordinate \(i\), the joint law of \((S_i,V_i)\) induced by the
input marginal \(P_{X_i}\) is
\begin{equation}
P_{S_i,V_i}(s,v)
=
P_S(s)\sum_x P_{X_i}(x)W_s(v|x,s),
\end{equation}
which depends only on the sensing marginal \(W_s\). Hence the sensing
value is also the same for \(W\) and \(\widetilde W\).

Therefore every code has the same pair of reliability and sensing
performance under the two joint channels. The achievable regions are
identical.

\section{Proof of Theorem 1}
\subsection*{A. Achievability}

Fix an arbitrary pmf $P_QP_{X|Q}$ with $\mathcal Q=\{1,\ldots,K\}$. We show that every pair $(R,D)$ satisfying
$R<I(X;Y|Q)$ and $D<I(S;V|Q)$ is achievable.

Let $\epsilon>0$ and $\delta>0$. Choose positive integers $n_1,\ldots,n_K$ such that $\sum_{q=1}^K n_q=n$ and $\frac{n_q}{n}\to P_Q(q)$ for every $q$. For each $q$, choose an $n_q$-type $P_{q,n}$ such that $P_{q,n}\to P_{X|Q=q}$.

For each $q$, the constant-composition coding theorem for discrete memoryless channels guarantees a code of blocklength $n_q$ whose codewords all belong to the type class $T(P_{q,n})$, whose average error probability vanishes as $n_q\to\infty$, and whose rate approaches $I_{P_{q,n}}(X;Y)$. Here
\begin{equation}
  T(P_{q,n}) = \{ x^{n_q} \in \mathcal{X}^{n_q}: \hat{P}_{x^{n_q}} = P_{q,n} \}.
\end{equation}
Hence, for all sufficiently large $n_q$, there exists a subcode with message size $M_{q,n}$ satisfying
\begin{equation}
\bar P_{e,q}^{(n_q)}\le \frac{\epsilon}{K}
\label{eq:sub_avg_err}
\end{equation}
and
\begin{equation}
\frac{1}{n_q}\log M_{q,n}\ge I_{P_{q,n}}(X;Y)-\delta.
\label{eq:sub_rate}
\end{equation}

We concatenate these $K$ subcodes to form an overall length-$n$ code. The message set is $\mathcal W_n=\prod_{q=1}^K [1:M_{q,n}]$, and for $w=(w_1,\ldots,w_K)\in\mathcal W_n$, the encoder outputs
\begin{equation}
f_n(w)=\bigl(f_{1,n}(w_1),\ldots,f_{K,n}(w_K)\bigr).
\end{equation}
The decoder operates blockwise, applying the $q$-th subdecoder to the $q$-th received subblock.

Let $E_q=\{\hat w_q\neq w_q\}$. Since $\{\hat w\neq w\}\subseteq \cup_{q=1}^K E_q$,
\begin{equation}
P_e^{(n)}
\le
\sum_{q=1}^K \bar P_{e,q}^{(n_q)}
\le
\epsilon.
\label{eq:global_avg_err}
\end{equation}
Therefore, the overall average error probability tends to zero.

We next verify the communication rate. Since $M_n=\prod_{q=1}^K M_{q,n}$, we have
\begin{equation}
\frac{1}{n}\log M_n
=
\sum_{q=1}^K \frac{n_q}{n}\cdot \frac{1}{n_q}\log M_{q,n}.
\label{eq:rate_decomp}
\end{equation}
Combining \eqref{eq:sub_rate} and \eqref{eq:rate_decomp},
\begin{equation}
\frac{1}{n}\log M_n
\ge
\sum_{q=1}^K \frac{n_q}{n}\bigl(I_{P_{q,n}}(X;Y)-\delta\bigr).
\label{eq:rate_lower_main}
\end{equation}
Letting $n\to\infty$ and using the continuity of mutual information in the input distribution,
\begin{equation}
  \begin{aligned}
    \liminf_{n\to\infty}\frac{1}{n}\log M_n
    &\ge
    \sum_{q=1}^K P_Q(q)\,I(X;Y|Q=q)-\delta \\
    &=
    I(X;Y|Q)-\delta.
  \end{aligned}\label{eq:rate_limit_main}
\end{equation}
Since $\delta>0$ is arbitrary, every $R<I(X;Y|Q)$ is achievable.

It remains to verify the sensing requirement. Let $\mathcal I_q$ denote the coordinate set of the $q$-th subblock. For each $x\in\mathcal X$,
\begin{equation}
\frac{1}{n_q}\sum_{i\in\mathcal I_q} P_{X_i}^{(n)}(x)
=
\frac{1}{M_n}\sum_{w\in\mathcal W_n}\frac{1}{n_q}\sum_{i\in\mathcal I_q}\mathbf 1\{f_n(w)_i=x\}.
\label{eq:marginal_identity}
\end{equation}
Since every $q$-th subblock codeword lies in $T(P_{q,n})$,
\begin{equation}
\frac{1}{n_q}\sum_{i\in\mathcal I_q}\mathbf 1\{f_n(w)_i=x\}
=
P_{q,n}(x)
\label{eq:type_identity}
\end{equation}
for every $w$. Therefore,
\begin{equation}
\frac{1}{n_q}\sum_{i\in\mathcal I_q} P_{X_i}^{(n)}
=
P_{q,n}.
\label{eq:marginal_avg}
\end{equation}

\begin{lemma}
For a fixed state distribution $P_S$ and sensing channel $W_s(v|x,s)$, define
\begin{equation}
D(P_X)\triangleq I(S;V),
\end{equation}
where the joint law of $(S,V)$ induced by the input distribution $P_X$ is
\begin{equation}
P_{S,V}(s,v)=P_S(s)\sum_{x} P_X(x) W_s(v|x,s).
\end{equation}
Then $D(P_X)$ is convex in $P_X$. Namely, for any two input distributions $P_X^{(1)}$ and $P_X^{(2)}$, and any $\lambda\in[0,1]$,
\begin{equation}
D\!\left(\lambda P_X^{(1)}+(1-\lambda)P_X^{(2)}\right)
\le
\lambda D\!\left(P_X^{(1)}\right)
+(1-\lambda)D\!\left(P_X^{(2)}\right).
\end{equation}
\end{lemma}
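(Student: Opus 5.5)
The approach is to reduce the claim to the classical fact that mutual information is convex in the channel for a fixed input distribution. Here the "input" is the state $S$ with fixed law $P_S$, and the "channel" is the induced transition $P_{V|S}$.

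First, for a given $P_X$, I write the induced conditional law of $V$ given $S$ as
\begin{equation}
P^{P_X}_{V|S}(v|s)=\sum_x P_X(x)W_s(v|x,s).
\end{equation}
This is well defined because $X$ is independent of $S$, so $P_{S,V}(s,v)=P_S(s)P^{P_X}_{V|S}(v|s)$, which matches the joint law in the statement. The map $P_X\mapsto P^{P_X}_{V|S}$ is affine. Hence for $\bar P_X=\lambda P_X^{(1)}+(1-\lambda)P_X^{(2)}$ we get
\begin{equation}
P^{\bar P_X}_{V|S}=\lambda P^{(1)}_{V|S}+(1-\lambda)P^{(2)}_{V|S}.
\end{equation}

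Second, I invoke convexity of $I(S;V)$ in $P_{V|S}$ for fixed $P_S$. To keep the argument self-contained, I would prove this with an auxiliary variable $U\in\{1,2\}$, with $\Pr(U=1)=\lambda$, independent of $S$, and with $V$ drawn from $P^{(U)}_{V|S}$. Then $U-(S,U)-V$ and $S\perp U$ give
\begin{equation}
I(S;V|U)=I(S;V,U)\ge I(S;V).
\end{equation}
The equality holds because $I(S;U)=0$. In this construction, $I(S;V|U)$ equals $\lambda D(P_X^{(1)})+(1-\lambda)D(P_X^{(2)})$, and the marginal joint law of $(S,V)$ is exactly the one induced by $\bar P_X$, so $I(S;V)=D(\bar P_X)$. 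Combining these gives the claimed inequality.

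Alternatively, I could cite the log-sum inequality applied to $D(P_{S,V}\,\|\,P_S\times P_V)$, using joint convexity of relative divergence: both arguments are affine in $P_{V|S}$ when $P_S$ is fixed.

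There is no real obstacle. The only points needing care are, first, checking that $S$ is independent of the mixing variable, so that $I(S;U)=0$, and second, checking that the mixture construction reproduces exactly the law $P_S(s)\sum_x \bar P_X(x)W_s(v|x,s)$. The latter follows directly from linearity.
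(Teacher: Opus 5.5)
Your proof is correct, but your main route differs from the paper's. The paper argues in one step. Since $P_{V|S}$ is affine in $P_X$, both $P_{S,V}$ and $P_SP_V$ are affine in $P_X$. Then $I(S;V)=D(P_{S,V}\|P_SP_V)$ is convex by joint convexity of relative entropy, which is exactly the alternative you mention at the end.

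Your primary argument instead builds a mixing variable $U$ independent of $S$. It uses only the chain rule and nonnegativity of mutual information: $I(S;V|U)=I(S;V,U)=I(S;V)+I(S;U|V)\ge I(S;V)$. The Markov chain $U-(S,U)-V$ you write is vacuous; all you actually use is $S\perp U$ and the chain rule.

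The paper's version is shorter, but it rests on joint convexity of divergence, which is itself a consequence of the log-sum inequality. Yours is self-contained and gives more. It identifies the convexity gap exactly:
\[
\lambda D(P_X^{(1)})+(1-\lambda)D(P_X^{(2)})-D(\bar P_X)=I(S;U|V),
\]
that is, the information about the state that the symbol-unaware receiver loses by not knowing which mode is active. Your $U$ plays precisely the role of the time-sharing variable $Q$ in Theorem~\ref{thm:main_region}. This gives a direct operational reading of the paper's remark that absorbing $Q$ into the averaged input can reduce the sensing value: it does so strictly exactly when $I(S;Q|V)>0$.
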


\begin{proof}
For each fixed $s$, the conditional distribution of $V$ given $S=s$ is
\begin{equation}
P_{V|S}(v|s)=\sum_x P_X(x) W_s(v|x,s),
\end{equation}
which is affine in $P_X$. It follows that both $P_{S,V}$ and the induced marginal $P_V$ depend affinely on $P_X$. On the other hand,
\begin{equation}
I(S;V)=D\!\left(P_{S,V}\,\middle\|\,P_S P_V\right),
\end{equation}
and the relative entropy is jointly convex in its two arguments. Therefore, $I(S;V)$ is convex as a function of $P_X$, which proves the claim.
\end{proof}

We now apply the above convexity result to the empirical input distributions within each index set $\mathcal I_q$. Since $D(P_X)$ is convex in $P_X$, Jensen's inequality yields
\begin{equation}
\frac{1}{n_q}\sum_{i\in\mathcal I_q}D\bigl(P_{X_i}^{(n)}\bigr)
\ge
D\!\left(\frac{1}{n_q}\sum_{i\in\mathcal I_q}P_{X_i}^{(n)}\right)
=
D(P_{q,n}).
\label{eq:jensen_gamma}
\end{equation}
That is, the average sensing value over the positions in $\mathcal I_q$ is lower bounded by the sensing value induced by the averaged input distribution $P_{q,n}$.

Summing \eqref{eq:jensen_gamma} over all $q\in[1:K]$, and using the decomposition of the blockwise sensing metric, we obtain
\begin{equation}
D_n(f_n)
=
\frac{1}{n}\sum_{i=1}^n D\bigl(P_{X_i}^{(n)}\bigr)
\ge
\sum_{q=1}^K \frac{n_q}{n}\,D(P_{q,n}).
\label{eq:sensing_lower_main}
\end{equation}

Finally, by the convergence $n_q/n \to P_Q(q)$ and $P_{q,n}\to P_{X|Q=q}$ as $n\to\infty$, together with the continuity of mutual information in the input distribution for finite alphabets, we obtain
\begin{equation}
\liminf_{n\to\infty}D_n(f_n)
\ge
\sum_{q=1}^K P_Q(q)\,D(P_{X|Q=q})
=
I(S;V|Q).
\label{eq:sensing_limit_main}
\end{equation}
Therefore, any sensing level strictly below $I(S;V|Q)$ is achievable. In particular, every $D<I(S;V|Q)$ can be attained.

Since $P_QP_{X|Q}$ was arbitrary, the achievability of \eqref{eq:region_main} follows.

\subsection*{B. Converse}

Suppose $(R,D)$ is achievable. Then there exists a sequence of $(2^{nR_n},n)$ codes such that $R_n\to R$, $P_e^{(n)}\to 0$, and $\liminf_{n\to\infty}D_n(f_n)\ge D$. Let $W$ be uniform on $[1:2^{nR_n}]$, and let $X^n=f_n(W)$.

By Fano's inequality, there exists $\epsilon_n\to 0$ such that
\begin{equation}
R_n
\le
\frac{1}{n}I(W;Y^n)+\epsilon_n.
\label{eq:fano_rate}
\end{equation}
Since $W\to X^n\to Y^n$,
\begin{equation}
I(W;Y^n)\le I(X^n;Y^n).
\label{eq:dp_converse}
\end{equation}
Hence,
\begin{equation}
R_n
\le
\frac{1}{n}I(X^n;Y^n)+\epsilon_n.
\label{eq:block_MI_bound}
\end{equation}

By the chain rule,
\begin{equation}
I(X^n;Y^n)=\sum_{i=1}^n I(X^n;Y_i|Y^{i-1}).
\label{eq:chain_MI}
\end{equation}
Because the communication channel is memoryless,
\begin{equation}
I(X^n;Y_i|Y^{i-1})\le I(X_i;Y_i).
\label{eq:memoryless_step}
\end{equation}
Therefore,
\begin{equation}
R_n
\le
\frac{1}{n}\sum_{i=1}^n I(X_i;Y_i)+\epsilon_n.
\label{eq:sum_MI_bound}
\end{equation}

Introduce a time-sharing index $T$, uniform on $[1:n]$ and independent of everything else. Define $Q_n=T$ and $(X_n,S_n,Y_n,V_n)=(X_T,S_T,Y_T,V_T)$. Then
\begin{equation}
I(X_n;Y_n|Q_n)
=
\frac{1}{n}\sum_{i=1}^n I(X_i;Y_i).
\label{eq:single_letter_comm_main}
\end{equation}
Substituting \eqref{eq:single_letter_comm_main} into \eqref{eq:sum_MI_bound} gives
\begin{equation}
R_n\le I(X_n;Y_n|Q_n)+\epsilon_n.
\label{eq:comm_bound_single}
\end{equation}

For the sensing term, by definition of $D(\cdot)$,
\begin{equation}
I(S_i;V_i)=D\bigl(P_{X_i}^{(n)}\bigr)
\label{eq:sensing_per_letter}
\end{equation}
for each $i$. Hence,
\begin{equation}
    \begin{aligned}
        I(S_n;V_n|Q_n)
        &=
        \frac{1}{n}\sum_{i=1}^n I(S_i;V_i) \\
        &=
        \frac{1}{n}\sum_{i=1}^n D\bigl(P_{X_i}^{(n)}\bigr) \\
        &=
        D_n(f_n).
        \label{eq:sensing_single_letter_main}
    \end{aligned}
\end{equation}

Define $\mathcal A
=
\left\{
\bigl(R(P_X),D(P_X)\bigr):
P_X\in\mathcal P(\mathcal X)
\right\}
\subset \mathbb R^2.$ Since $\mathcal P(\mathcal X)$ is compact and connected, and both $R(P_X)$ and $D(P_X)$ are continuous in $P_X$, the set $\mathcal A$ is compact and connected. Moreover, for every $n$, the pair $\bigl(I(X_n;Y_n|Q_n),I(S_n;V_n|Q_n)\bigr)$ belongs to $\operatorname{conv}(\mathcal A)$.

Because $\operatorname{conv}(\mathcal A)$ is compact, there exists a subsequence along which
\begin{equation}
\bigl(I(X_n;Y_n|Q_n),I(S_n;V_n|Q_n)\bigr)
\to (\bar R,\bar D)
\in \operatorname{conv}(\mathcal A).
\end{equation}
Using \eqref{eq:comm_bound_single}, together with $R_n\to R$ and $\epsilon_n\to 0$, yields
\begin{equation}
R\le \bar R.
\label{eq:comm_converse_final}
\end{equation}
Likewise, \eqref{eq:sensing_single_letter_main} and the assumption $\liminf_{n\to\infty}D_n(f_n)\ge D$ imply
\begin{equation}
D\le \bar D.
\label{eq:sensing_converse_final}
\end{equation}

Finally, by the Fenchel--Eggleston theorem, every point in $\operatorname{conv}(\mathcal A)$ can be represented as a convex combination of at most two points in $\mathcal A$. Therefore, there exists an auxiliary random variable $Q$ with $|\mathcal Q|\le 2$ such that
\begin{align}
I(X;Y|Q)=\bar R, \\
I(S;V|Q)=\bar D.
\end{align}
Combining this representation with \eqref{eq:comm_converse_final} and \eqref{eq:sensing_converse_final}, we obtain
\begin{align}
R\le I(X;Y|Q), \\
D\le I(S;V|Q).
\end{align}
This proves the converse and completes the proof of \eqref{eq:region_main}.

\section{Derivation of the Scalar BPSK Frontier}
\label{app:scalar_bpsk_derivation}

This appendix gives the details behind the scalar BPSK expressions used in Section~IV-A.

\subsection{Communication Mutual Information}

For the scalar communication channel $Y=H X + N_1$ with $X=\sqrt P B$, $B\in\{+1,-1\}$, $H\sim\mathcal{CN}(0,\Omega_H)$, and $N_1\sim\mathcal{CN}(0,\sigma_1^2)$, conditioning on $H=h$ yields an equivalent scalar BPSK-AWGN channel.

The effective instantaneous SNR is $\gamma=\frac{2P|h|^2}{\sigma_1^2}$. The factor $2$ appears because the complex Gaussian noise is represented through an equivalent real scalar BPSK-AWGN kernel with unit real noise variance.

Let $J(p,\gamma)\triangleq I(B;\sqrt\gamma B+Z)$, where $Z\sim\mathcal N(0,1)$. Then
\begin{equation}
I(X;Y|H=h)=J\left(p,\frac{2P|h|^2}{\sigma_1^2}\right).
\end{equation}
Averaging over $H$ gives
\begin{equation}
R(p)=I(X;Y|H)
=
\mathbb E_H
\left[
J\left(p,\frac{2P|H|^2}{\sigma_1^2}\right)
\right].
\end{equation}

Since $H\sim\mathcal{CN}(0,\Omega_H)$, the random variable $|H|^2$ is exponentially distributed with density
\begin{equation}
f_{|H|^2}(t)=\frac{1}{\Omega_H}e^{-t/\Omega_H},
\qquad t\ge 0.
\end{equation}
Using the change of variables $\gamma=\frac{2Pt}{\sigma_1^2}$, $t=\frac{\sigma_1^2}{2P}\gamma$, and $dt=\frac{\sigma_1^2}{2P}d\gamma$, we obtain
\begin{align}
R(p)
&=
\int_0^\infty
J\left(p,\frac{2Pt}{\sigma_1^2}\right)
\frac{1}{\Omega_H}e^{-t/\Omega_H}dt \nonumber\\
&=
\int_0^\infty
J(p,\gamma)
\frac{\sigma_1^2}{2P\Omega_H}
e^{-\frac{\sigma_1^2}{2P\Omega_H}\gamma}
d\gamma.
\end{align}
Define $\bar\gamma_c\triangleq \frac{2P\Omega_H}{\sigma_1^2}$, and $w_c(\gamma)\triangleq
\frac{1}{\bar\gamma_c}e^{-\gamma/\bar\gamma_c}$. Then
\begin{equation}
R(p)=\int_0^\infty J(p,\gamma)w_c(\gamma)d\gamma.
\end{equation}

\subsection{Sensing Mutual Information}

For the sensing channel
\begin{equation}
V=SX+N_2,
\end{equation}
where $S\sim\mathcal{CN}(0,\Omega_S)$ and $N_2\sim\mathcal{CN}(0,\sigma_2^2)$, the chain rule gives
\begin{equation}
I(S;V)=I(S;V|X)+I(X;V)-I(X;V|S).
\label{eq:chain_sensing_app}
\end{equation}

We first show that $I(X;V)=0$. Since $S$ is zero-mean circularly symmetric complex Gaussian,
\begin{equation}
S\stackrel{d}{=}-S.
\end{equation}
Therefore,
\begin{equation}
V|X=+\sqrt P
=
\sqrt P S+N_2
\end{equation}
and
\begin{equation}
V|X=-\sqrt P
=
-\sqrt P S+N_2
\end{equation}
have the same distribution. Hence the marginal law of $V$ is independent of $X$, and
\begin{equation}
I(X;V)=0.
\end{equation}

Next, conditioned on $X=\pm\sqrt P$, the sensing observation is a scalar Gaussian channel for $S$:
\begin{equation}
V=\pm\sqrt P S+N_2.
\end{equation}
Thus
\begin{equation}
I(S;V|X)
=
\log_2\left(1+\frac{P\Omega_S}{\sigma_2^2}\right).
\end{equation}

Finally, conditioned on $S=s$, the channel from $X$ to $V$ is again an equivalent BPSK-AWGN channel with instantaneous effective SNR $\gamma=\frac{2P|s|^2}{\sigma_2^2}$. Therefore,
\begin{equation}
I(X;V|S=s)
=
J\left(p,\frac{2P|s|^2}{\sigma_2^2}\right).
\end{equation}
Averaging over $S$ yields
\begin{equation}
I(X;V|S)
=
\mathbb E_S
\left[
J\left(p,\frac{2P|S|^2}{\sigma_2^2}\right)
\right].
\end{equation}

Since $|S|^2$ is exponentially distributed with mean $\Omega_S$, the same change of variables gives
\begin{equation}
I(X;V|S)
=
\int_0^\infty J(p,\gamma)w_s(\gamma)d\gamma,
\end{equation}
where $\bar\gamma_s\triangleq\frac{2P\Omega_S}{\sigma_2^2}$, $w_s(\gamma)\triangleq
\frac{1}{\bar\gamma_s}e^{-\gamma/\bar\gamma_s}$.
Substituting these terms into \eqref{eq:chain_sensing_app}, we obtain
\begin{equation}
G(p)=I(S;V)
=
\log_2\left(1+\frac{P\Omega_S}{\sigma_2^2}\right)
-
\int_0^\infty J(p,\gamma)w_s(\gamma)d\gamma.
\end{equation}

\subsection{Curvature of the Single-mode Boundary}

Because $J(p,\gamma)$ is nonincreasing in $p$ over $p\in[1/2,1]$, the communication mutual information decreases with $p$, while the symbol-unaware sensing mutual information increases with $p$:
\begin{equation}
R'(p)\le0,
\qquad
G'(p)\ge0.
\end{equation}
Therefore the single-mode boundary can be written as
\begin{equation}
C_0(D)=R(G^{-1}(D)),
\qquad D=G(p).
\end{equation}

Differentiating with respect to $D$ gives
\begin{equation}
C_0'(D)
=
\frac{dR(p)}{dG(p)}
=
\frac{R'(p)}{G'(p)}.
\end{equation}
Define $\eta(p)\triangleq-\frac{R'(p)}{G'(p)}$. Then $C_0'(D)=-\eta(p)$. Differentiating once more with respect to $D$ gives
\begin{equation}
C_0''(D)
=
-\frac{d\eta(p)}{dD}
=
-\frac{\eta'(p)}{G'(p)}.
\end{equation}
Hence the sign of $C_0''(D)$ is opposite to the sign of $\eta'(p)$.

We now compute $\eta'(p)$. By definition,
\begin{align}
\eta'(p)
&=
-\frac{d}{dp}\left(\frac{R'(p)}{G'(p)}\right) \nonumber\\
&=
-\frac{R''(p)G'(p)-R'(p)G''(p)}{[G'(p)]^2} \nonumber\\
&=
-\frac{R'(p)}{G'(p)}
\left[
\frac{R''(p)}{R'(p)}
-
\frac{G''(p)}{G'(p)}
\right] \nonumber\\
&=
\eta(p)
\left[
\frac{R''(p)}{R'(p)}
-
\frac{G''(p)}{G'(p)}
\right].
\label{eq:eta_prime_ratio_app}
\end{align}

Next, observe that
\begin{equation}
R'(p)=\int_0^\infty J_p(p,\gamma)w_c(\gamma)d\gamma,
\end{equation}
and
\begin{equation}
R''(p)=\int_0^\infty J_{pp}(p,\gamma)w_c(\gamma)d\gamma.
\end{equation}
Similarly,
\begin{equation}
G'(p)=-\int_0^\infty J_p(p,\gamma)w_s(\gamma)d\gamma,
\end{equation}
and
\begin{equation}
G''(p)=-\int_0^\infty J_{pp}(p,\gamma)w_s(\gamma)d\gamma.
\end{equation}

For $p\in(1/2,1)$, define $a_p(\gamma)\triangleq -J_p(p,\gamma)>0$ and $\kappa_p(\gamma)\triangleq
\frac{J_{pp}(p,\gamma)}{J_p(p,\gamma)}.$ Also define
$
A_c(p)\triangleq
\int_0^\infty a_p(\gamma)w_c(\gamma)d\gamma,
\quad
A_s(p)\triangleq
\int_0^\infty a_p(\gamma)w_s(\gamma)d\gamma.
$
Then
\begin{equation}
\mu_{c,p}(\gamma)
=
\frac{a_p(\gamma)w_c(\gamma)}{A_c(p)},
\end{equation}
and
\begin{equation}
\mu_{s,p}(\gamma)
=
\frac{a_p(\gamma)w_s(\gamma)}{A_s(p)}
\end{equation}
are probability densities on $(0,\infty)$.

Using these definitions,
\begin{align}
\frac{R''(p)}{R'(p)}
&=
\frac{
\int_0^\infty J_{pp}(p,\gamma)w_c(\gamma)d\gamma
}{
\int_0^\infty J_p(p,\gamma)w_c(\gamma)d\gamma
} \nonumber\\
&=
\int_0^\infty
\frac{J_{pp}(p,\gamma)}{J_p(p,\gamma)}
\frac{-J_p(p,\gamma)w_c(\gamma)}
{\int_0^\infty -J_p(p,t)w_c(t)dt}
d\gamma \nonumber\\
&=
\int_0^\infty
\kappa_p(\gamma)\mu_{c,p}(\gamma)d\gamma \nonumber\\
&=
\mathbb E_{\mu_{c,p}}[\kappa_p(\Gamma)].
\end{align}
Likewise,
\begin{equation}
\frac{G''(p)}{G'(p)}
=
\mathbb E_{\mu_{s,p}}[\kappa_p(\Gamma)].
\end{equation}
Substituting into \eqref{eq:eta_prime_ratio_app} yields
\begin{equation}
\eta'(p)
=
\eta(p)
\left(
\mathbb E_{\mu_{c,p}}[\kappa_p(\Gamma)]
-
\mathbb E_{\mu_{s,p}}[\kappa_p(\Gamma)]
\right).
\end{equation}

\subsection{Likelihood-Ratio Reduction in the Rayleigh Case}

The tilted densities satisfy
\begin{equation}
\frac{\mu_{c,p}(\gamma)}{\mu_{s,p}(\gamma)}
=
\frac{A_s(p)}{A_c(p)}
\frac{w_c(\gamma)}{w_s(\gamma)}.
\end{equation}
Since $A_s(p)/A_c(p)>0$ is independent of $\gamma$, the monotonicity of $\mu_{c,p}/\mu_{s,p}$ is the same as that of $w_c/w_s$.

For the Rayleigh fading case,
\begin{align}
\frac{w_c(\gamma)}{w_s(\gamma)}
&=
\frac{
\frac{1}{\bar\gamma_c}e^{-\gamma/\bar\gamma_c}
}{
\frac{1}{\bar\gamma_s}e^{-\gamma/\bar\gamma_s}
} \nonumber\\
&=
\frac{\bar\gamma_s}{\bar\gamma_c}
\exp\left[
\gamma
\left(
\frac{1}{\bar\gamma_s}
-
\frac{1}{\bar\gamma_c}
\right)
\right].
\end{align}
Therefore,
\begin{itemize}
\item if $\bar\gamma_c>\bar\gamma_s$, then $w_c(\gamma)/w_s(\gamma)$ is strictly increasing in $\gamma$;
\item if $\bar\gamma_c<\bar\gamma_s$, then $w_c(\gamma)/w_s(\gamma)$ is strictly decreasing in $\gamma$;
\item if $\bar\gamma_c=\bar\gamma_s$, then $w_c=w_s$.
\end{itemize}
Combining this likelihood-ratio ordering with the monotonicity of $\kappa_p(\gamma)$ proves the scalar BPSK classification theorem in Section~IV-B.

\section{Proof of Lemma 2}
\label{app:proof_kappa_monotone}

For convenience, all logarithms in this appendix are natural logarithms. 

Set $m=2p-1\in(0,1)$, so that $p=(1+m)/2$. Writing the same kernel as $J(m,\gamma)$, the chain rule gives $J_p=2J_m$ and $J_{pp}=4J_{mm}$. Hence
\begin{equation}\label{eq:kappa-reduction}
\kappa_p(\gamma)=2\,\frac{-J_{mm}(m,\gamma)}{-J_m(m,\gamma)}.
\end{equation}
Thus it suffices to prove that
\begin{equation}\label{eq:goal-ratio}
\gamma\mapsto \frac{-J_{mm}(m,\gamma)}{-J_m(m,\gamma)}
\end{equation}
is strictly increasing.

The proof has three steps. First, we rewrite the derivatives of $J$ as expectations of two scalar functions of a random variable $U_\gamma\in(0,1)$. Second, we show that the family $\{U_\gamma\}_{\gamma>0}$ is strictly increasing in the monotone likelihood-ratio (MLR) order. Third, we verify that the relevant likelihood-ratio functional is strictly increasing in $U_\gamma$, which implies the desired monotonicity.

\subsection{Derivative representation}

Consider \eqref{eq:com_channel}, let
\begin{equation}\label{eq:c-def}
c_\gamma(y)\triangleq \phi(y)e^{-\gamma/2}\cosh(\sqrt{\gamma}\,y),
\end{equation}
\begin{equation}\label{eq:t-def}
t_\gamma(y)\triangleq \tanh(\sqrt{\gamma}\,y),
\end{equation}
where $\phi(y)=(2\pi)^{-1/2}e^{-y^2/2}$. A direct calculation shows that the output density can be factorized as
\begin{equation}\label{eq:q-factorization}
q_{m,\gamma}(y)=c_\gamma(y)\bigl(1+m t_\gamma(y)\bigr).
\end{equation}
Moreover, $c_\gamma$ is even, $t_\gamma$ is odd, and $c_\gamma=q_{0,\gamma}$ is a probability density.

Since $J(m,\gamma)=h(Y)-h(Y|B)$ and $h(Y|B)=\tfrac12\log(2\pi e)$, we have
\begin{equation}\label{eq:J-entropy}
J(m,\gamma)=-\int_{\mathbb R} q_{m,\gamma}(y)\log q_{m,\gamma}(y)\,dy-\frac12\log(2\pi e).
\end{equation}
Substituting \eqref{eq:q-factorization} into \eqref{eq:J-entropy} and using the parity of $c_\gamma$, $t_\gamma$, and $\log c_\gamma$, we obtain
\begin{equation}\label{eq:J-reduced}
J(m,\gamma)=C_\gamma-\int_{\mathbb R} c_\gamma(y)\bigl(1+m t_\gamma(y)\bigr)
\log\bigl(1+m t_\gamma(y)\bigr)\,dy,
\end{equation}
where $C_\gamma$ depends only on $\gamma$.

Now fix an arbitrary compact interval $K=[m_0,m_1]\subset(0,1)$. Since $t_\gamma(y)\in(-1,1)$ for all $y$, one has $1+m t_\gamma(y)\in[1-m_1,1+m_1]$ for every $m\in K$. Therefore,
\begin{align}
\left|\frac{\partial}{\partial m}\Big[(1+m t_\gamma(y))\log(1+m t_\gamma(y))\Big]\right|
&\le C_{1,K}, \label{eq:dom-1}\\
\left|\frac{\partial^2}{\partial m^2}\Big[(1+m t_\gamma(y))\log(1+m t_\gamma(y))\Big]\right|
&\le C_{2,K}, \label{eq:dom-2}
\end{align}
for finite constants $C_{1,K},C_{2,K}$ independent of $y$. Because $c_\gamma$ is a probability density, the bounds in \eqref{eq:dom-1}--\eqref{eq:dom-2} are integrable against $c_\gamma(y)\,dy$. Hence the dominated convergence theorem allows differentiation under the integral sign in \eqref{eq:J-reduced}, yielding
\begin{align}
J_m(m,\gamma)
&=-\int_{\mathbb R} c_\gamma(y)
 t_\gamma(y)\log\bigl(1+m t_\gamma(y)\bigr)\,dy, \label{eq:Jm}\\
J_{mm}(m,\gamma)
&=-\int_{\mathbb R} c_\gamma(y)
\frac{t_\gamma(y)^2}{1+m t_\gamma(y)}\,dy. \label{eq:Jmm}
\end{align}

Let $Y_0\sim c_\gamma$, and define $T_\gamma\triangleq t_\gamma(Y_0)$ and $U_\gamma\triangleq |T_\gamma|$. Because $c_\gamma$ is even and $t_\gamma$ is odd, $T_\gamma$ is symmetric about zero. Symmetrizing \eqref{eq:Jm} and \eqref{eq:Jmm} gives
\begin{align}
-J_m(m,\gamma)
&=\E\!\left[U_\gamma\,\artanh(mU_\gamma)\right], \label{eq:Jm-sym}\\
-J_{mm}(m,\gamma)
&=\E\!\left[\frac{U_\gamma^2}{1-m^2U_\gamma^2}\right]. \label{eq:Jmm-sym}
\end{align}
Define
\begin{align}
A_m(u)&\triangleq \frac{u^2}{1-m^2u^2}, \label{eq:Am-def}\\
B_m(u)&\triangleq u\,\artanh(mu), \qquad u\in(0,1). \label{eq:Bm-def}
\end{align}
Then \eqref{eq:kappa-reduction}, \eqref{eq:Jm-sym}, and \eqref{eq:Jmm-sym} imply
\begin{equation}\label{eq:kappa-AB}
\kappa_p(\gamma)=2\,\frac{\E[A_m(U_\gamma)]}{\E[B_m(U_\gamma)]}.
\end{equation}

\subsection{MLR monotonicity of the transformed variable}

Since $c_\gamma=q_{0,\gamma}$, the random variable $Y_0$ has the same law as the channel output under equiprobable input. Hence
\begin{equation}\label{eq:Y0-law}
Y_0\overset{d}=\sqrt{\gamma}\,B+Z,
\qquad \Pr(B=\pm1)=\frac12.
\end{equation}
Multiplying by $\sqrt{\gamma}$ and using the symmetry of $B$ and $Z$ yields
\begin{equation}\label{eq:Ug-Vg}
U_\gamma=\bigl|\tanh(\sqrt{\gamma}\,Y_0)\bigr|
\overset{d}=\tanh\bigl(|\gamma+\sqrt{\gamma}Z|\bigr).
\end{equation}
Define $V_\gamma\triangleq |\gamma+\sqrt{\gamma}Z|$, so that $U_\gamma=\tanh(V_\gamma)$.

For $v\ge0$, the density of $V_\gamma$ is
\begin{align}
f_{V_\gamma}(v)
&=\frac{1}{\sqrt{2\pi\gamma}}
\left(e^{-(v-\gamma)^2/(2\gamma)}+e^{-(v+\gamma)^2/(2\gamma)}\right) \nonumber\\
&=\frac{2\cosh v}{\sqrt{2\pi\gamma}}
e^{-\frac{\gamma}{2}-\frac{v^2}{2\gamma}}. \label{eq:V-density}
\end{align}
Therefore, for any $\gamma_2>\gamma_1>0$,
\begin{equation}\label{eq:V-LR}
\frac{f_{V_{\gamma_2}}(v)}{f_{V_{\gamma_1}}(v)}
=\sqrt{\frac{\gamma_1}{\gamma_2}}
e^{\left(-\frac{\gamma_2-\gamma_1}{2}\right)}
e^{\left[\frac{v^2}{2}\left(\frac1{\gamma_1}-\frac1{\gamma_2}\right)\right]}.
\end{equation}
Because $\frac1{\gamma_1}-\frac1{\gamma_2}>0$, the right-hand side of \eqref{eq:V-LR} is strictly increasing in $v$. Thus the family $\{V_\gamma\}_{\gamma>0}$ is strictly increasing in the MLR order.

Next, note that $u=\tanh v$ is a strictly increasing $C^1$ bijection from $(0,\infty)$ onto $(0,1)$, with inverse $v=\artanh u$. By change of variables,
\begin{equation}\label{eq:U-density}
f_{U_\gamma}(u)=\frac{f_{V_\gamma}(\artanh u)}{1-u^2},
\qquad 0<u<1.
\end{equation}
Hence, for $\gamma_2>\gamma_1$,
\begin{equation}\label{eq:U-LR}
\frac{f_{U_{\gamma_2}}(u)}{f_{U_{\gamma_1}}(u)}
=\frac{f_{V_{\gamma_2}}(\artanh u)}{f_{V_{\gamma_1}}(\artanh u)},
\end{equation}
which is strictly increasing in $u$ because both the likelihood ratio in \eqref{eq:V-LR} and the map $u\mapsto\artanh u$ are strictly increasing. Therefore, $\{U_\gamma\}_{\gamma>0}$ is also strictly increasing in the MLR order.

\subsection{Monotonicity of the ratio functional}

For $u\in(0,1)$, consider the ratio
\begin{equation}\label{eq:psi-def}
\psi(u)\triangleq \frac{A_m(u)}{B_m(u)}
=\frac{u}{(1-m^2u^2)\artanh(mu)}.
\end{equation}
Let $x=mu\in(0,1)$ and define
\begin{equation}\label{eq:h-def}
h(x)\triangleq \frac{x}{(1-x^2)\artanh x}.
\end{equation}
Then $\psi(u)=m^{-1}h(mu)$. Hence it is enough to prove that $h$ is strictly increasing on $(0,1)$. Differentiating $\log h(x)$ gives
\begin{equation}\label{eq:h-log-deriv}
\frac{h'(x)}{h(x)}=
\frac{(1+x^2)\artanh x-x}{x(1-x^2)\artanh x}.
\end{equation}
Let
\begin{equation}\label{eq:g-def}
g(x)\triangleq (1+x^2)\artanh x-x.
\end{equation}
Since $g(0)=0$ and
\begin{equation}\label{eq:g-prime}
g'(x)=2x\artanh x+\frac{2x^2}{1-x^2}>0,
\qquad x\in(0,1),
\end{equation}
we have $g(x)>0$ for all $x\in(0,1)$. By \eqref{eq:h-log-deriv}, this implies $h'(x)>0$ on $(0,1)$, and therefore $\psi(u)$ is strictly increasing on $(0,1)$.

Now define the $B_m$-tilted density
\begin{equation}\label{eq:tilted-density}
\tilde f_\gamma(u)
\triangleq
\frac{B_m(u)f_{U_\gamma}(u)}{\int_0^1 B_m(t)f_{U_\gamma}(t)\,dt},
\qquad 0<u<1.
\end{equation}
Because $B_m(u)>0$ on $(0,1)$, for any $\gamma_2>\gamma_1$ we have
\begin{equation}\label{eq:tilted-LR}
\frac{\tilde f_{\gamma_2}(u)}{\tilde f_{\gamma_1}(u)}
=C_{\gamma_1,\gamma_2}
\frac{f_{U_{\gamma_2}}(u)}{f_{U_{\gamma_1}}(u)},
\end{equation}
where $C_{\gamma_1,\gamma_2}>0$ does not depend on $u$. Thus the family $\{\tilde f_\gamma\}_{\gamma>0}$ is also strictly increasing in the MLR order.

Fix $\gamma_2>\gamma_1$ and write
\begin{equation}\label{eq:r-def}
r(u)\triangleq \frac{\tilde f_{\gamma_2}(u)}{\tilde f_{\gamma_1}(u)}.
\end{equation}
Since $r$ is strictly increasing and $\int_0^1 r(u)\tilde f_{\gamma_1}(u)\,du=1$, there exists a unique $u_\star\in(0,1)$ such that
\begin{equation}\label{eq:single-crossing}
\tilde f_{\gamma_2}(u)-\tilde f_{\gamma_1}(u)
\begin{cases}
<0, & 0<u<u_\star,\\
=0, & u=u_\star,\\
>0, & u_\star<u<1.
\end{cases}
\end{equation}
Since $\psi$ is strictly increasing, the product
\begin{equation}\label{eq:positive-product}
\bigl(\psi(u)-\psi(u_\star)\bigr)
\bigl(\tilde f_{\gamma_2}(u)-\tilde f_{\gamma_1}(u)\bigr)
\end{equation}
is nonnegative for all $u\in(0,1)$ and strictly positive except at $u=u_\star$. Therefore,
\begin{align}
\Delta_\psi(\gamma_2,\gamma_1)
&\triangleq \E_{\tilde f_{\gamma_2}}[\psi(U)]-\E_{\tilde f_{\gamma_1}}[\psi(U)] \nonumber\\
&=\int_0^1 \psi(u)\bigl(\tilde f_{\gamma_2}(u)-\tilde f_{\gamma_1}(u)\bigr)\,du \nonumber\\
&=\int_0^1 \bigl(\psi(u)-\psi(u_\star)\bigr)
\bigl(\tilde f_{\gamma_2}(u)-\tilde f_{\gamma_1}(u)\bigr)\,du \nonumber\\
&>0. \label{eq:strict-expectation}
\end{align}
Hence $\gamma\mapsto \E_{\tilde f_\gamma}[\psi(U)]$ is strictly increasing on $(0,\infty)$.

Finally, by the definition of the tilted density in \eqref{eq:tilted-density},
\begin{equation}\label{eq:tilted-identity}
\E_{\tilde f_\gamma}\!\left[\frac{A_m(U)}{B_m(U)}\right]
=\frac{\E[A_m(U_\gamma)]}{\E[B_m(U_\gamma)]}.
\end{equation}
Combining \eqref{eq:kappa-AB}, \eqref{eq:strict-expectation}, and \eqref{eq:tilted-identity} yields that
\begin{equation}\label{eq:ratio-final}
\gamma\mapsto
\frac{\E[A_m(U_\gamma)]}{\E[B_m(U_\gamma)]}
=\frac{-J_{mm}(m,\gamma)}{-J_m(m,\gamma)}
\end{equation}
is strictly increasing. By \eqref{eq:kappa-reduction}, $\gamma\mapsto \kappa_p(\gamma)$ is strictly increasing on $(0,\infty)$. This completes the proof.

\section{Proof of Lemma 3}
\label{app:proof_mlr_ordering}

Define
\begin{equation}
\Delta \triangleq \mathbb E_{q_1}[\phi(X)]-\mathbb E_{q_2}[\phi(X)].
\end{equation}
Since $q_1=rq_2$, we have
\begin{equation}
\Delta
=
\int_{\mathcal X}\phi(x)\bigl(r(x)-1\bigr)q_2(x)\,dx.
\end{equation}
Moreover, because both $q_1$ and $q_2$ are probability densities,
\begin{equation}
\int_{\mathcal X} r(x)q_2(x)\,dx
=
\int_{\mathcal X} q_1(x)\,dx
=
1.
\end{equation}
Hence,
\begin{equation}
\Delta
=
\int_{\mathcal X}\phi(x)r(x)q_2(x)\,dx
-
\int_{\mathcal X}\phi(x)q_2(x)\,dx
\int_{\mathcal X}r(x)q_2(x)\,dx.
\end{equation}
If $X\sim q_2$, then
\begin{equation}
\Delta
=
\operatorname{Cov}_{q_2}\bigl(\phi(X),r(X)\bigr).
\end{equation}

Now let $X$ and $Y$ be i.i.d.\ random variables with density $q_2$. Using the symmetric representation of covariance, we obtain
\begin{equation}
2\,\operatorname{Cov}_{q_2}\bigl(\phi(X),r(X)\bigr)
=
\mathbb E\!\left[(\phi(X)-\phi(Y))(r(X)-r(Y))\right].
\end{equation}
Therefore,
\begin{equation}
\Delta
=
\frac12\,
\mathbb E\!\left[(\phi(X)-\phi(Y))(r(X)-r(Y))\right].
\end{equation}
Equivalently,
\begin{equation}
\Delta
=
\frac12
\iint_{\mathcal X\times\mathcal X}
(\phi(x)-\phi(y))(r(x)-r(y))
\,q_2(x)q_2(y)\,dx\,dy.
\end{equation}

Since both $\phi$ and $r$ are nondecreasing, for every $x,y\in\mathcal X$,
\begin{equation}
(\phi(x)-\phi(y))(r(x)-r(y))\ge 0.
\end{equation}
Thus the integrand is everywhere nonnegative, which implies $\Delta\ge 0$. Hence,
\begin{equation}
\mathbb E_{q_1}[\phi(X)]\ge \mathbb E_{q_2}[\phi(X)].
\end{equation}

If, in addition, both $\phi$ and $r$ are strictly increasing, then for every $x\neq y$,
\begin{equation}
(\phi(x)-\phi(y))(r(x)-r(y))>0.
\end{equation}
Since $X$ and $Y$ both admit the density $q_2$, we have
\begin{equation}
\mathbb P(X=Y)=0.
\end{equation}
Therefore,
\begin{equation}
(\phi(X)-\phi(Y))(r(X)-r(Y))>0
\qquad \text{almost surely},
\end{equation}
and hence
\begin{equation}
\Delta
=
\frac12\,
\mathbb E\!\left[(\phi(X)-\phi(Y))(r(X)-r(Y))\right]
>0.
\end{equation}
It follows that
\begin{equation}
\mathbb E_{q_1}[\phi(X)]>\mathbb E_{q_2}[\phi(X)].
\end{equation}

\section{Numerical Evaluation of Frontier Curvature and Boundedness}
\label{app:numerical_curvature}

\begin{table*}[!t]
\centering
\caption{Normalized curvature and endpoint chord deviation for scalar BPSK and i.i.d. SIMO-BPSK examples.}
\label{tab:curvature_numerics}
\begin{tabular}{c c c c c c c}
\hline
Model & \((N_c,N_s)\) & SNR pair [dB] & Shape 
& \(\widetilde{\mathcal K}(0.8)\)
& \(\max_{p\in[0.55,0.95]}|\widetilde{\mathcal K}(p)|\)
& \(\Delta_{\rm TS,n}\) / \(\Delta_{\rm ch,n}\) \\
\hline
Scalar BPSK & -- & \((20,-20)\) & Concave
& \(-0.177\) & \(0.426\) & \(0.000/0.106\) \\
Scalar BPSK & -- & \((-20,20)\) & Convex
& \(+0.177\) & \(0.426\) & \(0.106/0.106\) \\
Scalar BPSK & -- & \((-30,50)\) & Convex
& \(+0.184\) & \(0.436\) & \(0.109/0.109\) \\
Scalar BPSK & -- & \((-20,60)\) & Convex
& \(+0.178\) & \(0.429\) & \(0.107/0.107\) \\
Scalar BPSK & -- & \((0,0)\) & Linear
& \(0.000\) & \(0.000\) & \(0.000/0.000\) \\
\hline
SIMO-BPSK & \((4,1)\) & \((0,0)\) & Concave
& \(-0.042\) & \(0.117\) & \(0.000/0.031\) \\
SIMO-BPSK & \((1,4)\) & \((0,0)\) & Convex
& \(+0.042\) & \(0.117\) & \(0.031/0.031\) \\
SIMO-BPSK & \((1,4)\) & \((-20,20)\) & Convex
& \(+0.178\) & \(0.429\) & \(0.107/0.107\) \\
SIMO-BPSK & \((2,2)\) & \((0,0)\) & Linear
& \(0.000\) & \(0.000\) & \(0.000/0.000\) \\
\hline
\end{tabular}
\end{table*}

This appendix describes how the curvature values reported in the numerical examples are computed. 
The discussion also clarifies why a large SNR separation does not necessarily produce a proportionally large boundary deformation.

Recall that the single-mode frontier is parameterized as \eqref{eq:C0_def_bpsk}-\eqref{eq:C0_second_eta}.
Thus, the sign of \(C_0''(D)\) determines whether the raw frontier is locally concave or convex.

To compare curvature values across different SNR settings, we use normalized coordinates
\begin{equation}
    x(p)
    =
    \frac{G(p)-G(1/2)}{G(1)-G(1/2)}
    =
    1-\frac{A_s(p)}{A_s(1/2)},
    \label{eq:app_normalized_D}
\end{equation}
and
\begin{equation}
    y(p)
    =
    \frac{R(p)}{R(1/2)}
    =
    \frac{A_c(p)}{A_c(1/2)}.
    \label{eq:app_normalized_R}
\end{equation}
The normalized signed geometric curvature is then computed as
\begin{equation}
    \widetilde{\mathcal K}(p)
    =
    \frac{x'(p)y''(p)-y'(p)x''(p)}
    {\left([x'(p)]^2+[y'(p)]^2\right)^{3/2}}.
    \label{eq:app_normalized_curvature}
\end{equation}
Here
\begin{align}
x'(p)  &=-\frac{A_s'(p)}{A_s\!\left(\frac12\right)}, \\
x''(p) &=-\frac{A_s''(p)}{A_s\!\left(\frac12\right)}, \\
y'(p)  &= \frac{A_c'(p)}{A_c\!\left(\frac12\right)}, \\
y''(p) &= \frac{A_c''(p)}{A_c\!\left(\frac12\right)} .
\end{align}
The derivatives are obtained by differentiating under the expectation:
\begin{align}
    A_i'(p) &= \mathbb E[J_p(p,\Gamma_i)], \\
    A_i''(p) &= \mathbb E[J_{pp}(p,\Gamma_i)].
\end{align}

For numerical integration, the Gaussian expectation inside \(J(p,\gamma)\), \(J_p(p,\gamma)\), and \(J_{pp}(p,\gamma)\) is evaluated using Gauss-Hermite quadrature. The outer expectation over \(\Gamma_i\) is evaluated using Gauss-Laguerre quadrature in the scalar Rayleigh case and generalized Gauss-Laguerre quadrature in the i.i.d. SIMO case. Specifically, for the scalar exponential case,
\begin{equation}
    \begin{aligned}
        A_i^{(r)}(p)
        &=
        \mathbb E\!\left[
        \frac{\partial^r}{\partial p^r}J(p,\Gamma_i)
        \right] \\
        &\approx
        \sum_{\ell=1}^{L}
        w_\ell
        \frac{\partial^r}{\partial p^r}
        J(p,\bar\gamma_i t_\ell),
        \qquad r=0,1,2,
        \label{eq:app_laguerre_exp}
    \end{aligned}
\end{equation}

where \(\{t_\ell,w_\ell\}\) are the standard Gauss-Laguerre nodes and weights. For the i.i.d. SIMO case,
\begin{equation}
    A_i^{(r)}(p)
    \approx
    \frac{1}{\Gamma(N_i)}
    \sum_{\ell=1}^{L}
    w_\ell^{(N_i-1)}
    \frac{\partial^r}{\partial p^r}
    J(p,\theta_i t_\ell),
    \qquad r=0,1,2,
    \label{eq:app_laguerre_gamma}
\end{equation}
where \(\{t_\ell,w_\ell^{(N_i-1)}\}\) are the generalized Gauss-Laguerre nodes and weights associated with the weight \(t^{N_i-1}e^{-t}\).

In addition to local curvature, we also report the normalized endpoint chord deviation. In the normalized coordinates \((x,y)\), the endpoint time-sharing line is simply
\begin{equation}
    y_{\rm TS}(x)=1-x.
\end{equation}
Therefore, the normalized endpoint time-sharing gain is
\begin{equation}
    \Delta_{\rm TS,n}
    =
    \max_{p\in[1/2,1]}
    \left[1-x(p)-y(p)\right]_+,
    \label{eq:app_normalized_ts_gain}
\end{equation}
where \([a]_+=\max\{a,0\}\). We also define the normalized absolute chord deviation
\begin{equation}
    \Delta_{\rm ch,n}
    =
    \max_{p\in[1/2,1]}
    \left|1-x(p)-y(p)\right|.
    \label{eq:app_normalized_chord_deviation}
\end{equation}
The quantity \(\Delta_{\rm TS,n}\) measures the actual gain from endpoint time sharing, whereas \(\Delta_{\rm ch,n}\) measures the global deviation from the endpoint chord irrespective of whether the frontier is concave or convex.

Table~\ref{tab:curvature_numerics} gives representative numerical values. The curvature maximum is computed over \(p\in[0.55,0.95]\), excluding the two degenerate endpoints where both parametric derivatives may become numerically ill-conditioned. The scalar rows use the SNR pair \((\bar\gamma_c,\bar\gamma_s)\) in dB, while the SIMO rows use the per-branch SNR pair \((\theta_c,\theta_s)\) in dB. The values were obtained with \(80\) Gauss-Hermite nodes and \(100\) Gauss-Laguerre or generalized Gauss-Laguerre nodes.

The table shows that the sign of the reported signed curvature follows the analytical concavity criterion, but the normalized curvature and the normalized chord deviation remain moderate. This explains why, even when time sharing is theoretically necessary in the convex cases, the practical gain over a properly selected single-mode input can be limited.

This behavior can be partially understood from the boundedness of the kernel curvature ratio \eqref{eq:kappa_def}. For every fixed \(p\in(1/2,1)\), \(\kappa_p(\gamma)\) is strictly increasing in \(\gamma\), but it is bounded between its low-SNR and high-SNR limits:
\begin{align}
    \lim_{\gamma\rightarrow 0}\kappa_p(\gamma)
    &=
    \frac{2}{2p-1}, \\
    \lim_{\gamma\to\infty}\kappa_p(\gamma)
    &=
    \frac{1}{p(1-p)\ln\frac{p}{1-p}}.
    \label{eq:app_kappa_limits}
\end{align}
Hence, for $\gamma > 0$,
\begin{equation}
    \frac{2}{2p-1}
    \le
    \kappa_p(\gamma)
    \le
    \frac{1}{p(1-p)\ln\frac{p}{1-p}}.
    \label{eq:app_kappa_bounded}
\end{equation}
Consequently, the SNR ordering can determine the sign of the frontier curvature through the stochastic ordering of the effective SNR distributions, but increasing the SNR separation cannot make the kernel contribution to the curvature grow without bound. This boundedness should be interpreted for each fixed \(p\), or uniformly over compact subintervals of \((1/2,1)\); it is not a uniform statement over the entire open interval because the limiting expressions in \eqref{eq:app_kappa_limits} become singular near the endpoints.


\begin{thebibliography}{99}

\bibitem{Liu2022JSAC}
F. Liu \emph{et al.}, “Integrated sensing and communications: Toward dual-functional wireless networks for 6G and beyond,” \emph{IEEE J. Sel. Areas Commun.}, vol. 40, no. 6, pp. 1728--1767, Jun. 2022.

\bibitem{Ahmadipour2024ISAC}
M. Ahmadipour, M. Kobayashi, M. Wigger, and G. Caire, “An information-theoretic approach to joint sensing and communication,” \emph{IEEE Trans. Inf. Theory}, vol. 70, no. 2, pp. 1124--1146, Feb. 2024.

\bibitem{Wei2026JSAC}
Z. Wei, J. Piao, L. Wang, H. Wu, and Z. Feng, “Mutual information of MIMO-OFDM integrated sensing and communication system in space--time--frequency domains,” \emph{IEEE J. Sel. Areas Commun.}, vol. 44, pp. 92--106, 2026.

\bibitem{Li2024TVT}
J. Li, G. Zhou, T. Gong, and N. Liu, “A framework for mutual information-based MIMO integrated sensing and communication beamforming design,” \emph{IEEE Trans. Veh. Technol.}, vol. 73, no. 6, pp. 8352--8366, Jun. 2024.

\bibitem{Chen2026TWC}
G. Chen \emph{et al.}, “Movable antenna-enabled MIMO integrated sensing and communication: A unified mutual information framework,” \emph{IEEE Trans. Wireless Commun.}, vol. 25, pp. 14440--14454, 2026.

\bibitem{Xiong2023GaussianISAC}
Y. Xiong, F. Liu, Y. Cui, W. Yuan, T. X. Han, and G. Caire, “On the fundamental tradeoff of integrated sensing and communications under Gaussian channels,” \emph{IEEE Trans. Inf. Theory}, vol. 69, no. 9, pp. 5723--5751, Sep. 2023.

\bibitem{Liu2023DeterministicRandom}
F. Liu, Y. Xiong, K. Wan, T. X. Han, and G. Caire, “Deterministic-random tradeoff of integrated sensing and communications in Gaussian channels: A rate-distortion perspective,” in \emph{Proc. IEEE Int. Symp. Inf. Theory (ISIT)}, Taipei, Taiwan, 2023, pp. xxx--xxx.

\bibitem{Jiao2025BistaticISAC}
T. Jiao \emph{et al.}, “Information-theoretic limits of bistatic integrated sensing and communication,” \emph{IEEE Trans. Inf. Theory}, vol. 71, no. 12, pp. 9302--9318, Dec. 2025.

\bibitem{Xie2024RandomSMI}
L. Xie, F. Liu, J. Luo, and S. Song, “Sensing mutual information with random signals in Gaussian channels,” \emph{IEEE Trans. Commun.}, vol. 73, no. 10, pp. 9437--9452, Oct. 2025.

\bibitem{YLiu2026TIT}
Y. Liu, M. Li, L. Ong, and A. Yener, “Fundamental limits of bistatic integrated sensing and communications over memoryless relay channels,” \emph{IEEE Trans. Inf. Theory}, early access, 2026.

\bibitem{JChen2026TCOM}
J. Chen, L. Yu, Y. Li, W. Shi, Y. Ge, and W. Tong, “On the fundamental limits of integrated sensing and communications under logarithmic loss,” \emph{IEEE Trans. Commun.}, vol. 74, pp. 46--60, 2026.

\bibitem{SLu2026TN}
S. Lu \emph{et al.}, “Sensing with random communication signals,” \emph{IEEE Netw.}, vol. 40, no. 1, pp. 98--106, Jan. 2026.

\bibitem{Fliu2026JSAC}
F. Liu \emph{et al.}, “Sensing with communication signals: From information theory to signal processing,” \emph{IEEE J. Sel. Areas Commun.}, vol. 44, pp. 1--30, 2026.

\bibitem{WZhang2011}
W. Zhang, S. Vedantam, and U. Mitra, “Joint transmission and state estimation: A constrained channel coding approach,” \emph{IEEE Trans. Inf. Theory}, vol. 57, no. 10, pp. 7084--7095, Oct. 2011.

\end{thebibliography}
\end{document}